\newtheorem{theorem}{Theorem}
\journal{arXiv}
\begin{document}

\begin{frontmatter}

\title{A Physical Perspective on Control Points and Polar Forms: B\'{e}zier Curves, Angular Momentum and Harmonic Oscillators}

\author{Márton Vaitkus}\ead{vaitkus@iit.bme.hu}
\address{Budapest University of Technology and Economics}

\begin{abstract}
Bernstein polynomials and B\'{e}zier curves play an important role in computer-aided geometric design and numerical analysis, and their study relates to mathematical fields such as abstract algebra, algebraic geometry and probability theory. We describe a theoretical framework that incorporates the different aspects of the Bernstein-B\'{e}zier theory, based on concepts from theoretical physics. We relate B\'{e}zier curves to the theory of angular momentum in both classical and quantum mechanics, and describe physical analogues of various properties of B\'{e}zier curves -- such as their connection with polar forms -- in the context of quantum spin systems. This previously unexplored relationship between geometric design and theoretical physics is established using the mathematical theory of Hamiltonian mechanics and geometric quantization. An alternative description of spin systems in terms of harmonic oscillators serves as a physical analogue of P\'{o}lya's urn models for B\'{e}zier curves. We relate harmonic oscillators to Poisson curves and the analytical blossom as well. We present an overview of the relevant mathematical and physical concepts, and discuss opportunities for further research.
\end{abstract}

\begin{keyword}
B\'{e}zier curves, Bernstein polynomials, toric varieties, quantum mechanics, angular momentum, spin, qubit, geometric quantization, Hamiltonian mechanics, harmonic oscillators, binomial distributions
\end{keyword}

\end{frontmatter}


\section{Introduction}\label{sec:1_intro}
The Bernstein polynomial basis plays a central role in computer graphics and computer-aided geometric design (CAGD) \cite{farin2002curves,gallier2000curves}. The corresponding B\'{e}zier curves, surfaces and volumes are specified by a set of control points, which can be interacted with in a intuitive way, while also possessing superior numerical stability \cite{farouki2012bernstein} and efficient algorithms for evaluation, differentiation, subdivision and integration \cite{goldman2002pyramid}. Other widely used free-form curve and surface representations, such as B-Splines and Catmull-Clark/Doo-Sabin subdivision surfaces can also be decomposed into, or approximated with B\'{e}zier segments/patches. Use of the Bernstein polynomials is increasingly widespread in numerical analysis as well, in particular as shape functions for finite element methods \cite{borden2011isogeometric}. The study of Bernstein-B\'{e}zier representations already makes use of a wide range of mathematical fields including multi-linear algebra \cite{ramshaw2001paired}, algebraic geometry \cite{goldman2003topics} and probability theory \cite{goldman1985polya}. In this work -- first in a planned series -- we aim to introduce a novel theoretical framework, based on concepts from mathematical physics, which incorporates the major aspects of the Bernstein-B\'{e}zier theory. 

The key message we want to communicate is that B\'{e}zier curves and Bernstein polynomials have a close connection to both classical and quantum physics. In particular, Bernstein-B\'{e}zier theory for curves is related to the mechanics of 3-dimensional rotational motion, classical \emph{angular momentum} and quantum \emph{spin}. Remarkably, the important aspects of the Bernstein-B\'{e}zier theory, such as control points, polar forms, binomial probability distributions all make an appearance within the context of physics. The urn model analogy for B\'{e}zier curves \cite{goldman1985polya}, and the closely related Poisson curves \cite{morin2000subdivision,morin2002analytic} also have physical interpretations in terms of harmonic oscillators. The known correspondences are listed in \autoref{tab:list}.

\begin{table}[h]
	\centering
	\begin{tabular}[c]{|c|c|}
		\hline
		CAGD & Physics \\\hline\hline
		B\'{e}zier curves & Spin systems \\\hline
		Degree-$d$ & Spin-$\frac{d}{2}$ \\\hline
		Control points & Quantum eigenstates \\\hline
		Blending functions & Quantum coherent states \\\hline
		Polar forms & Quantum spin addition \\\hline
		Domain/Newton polytope & Moment map \\\hline
		Urn analogy & Oscillator analogy \\\hline
		Poisson curves & Harmonic oscillators \\ 
		\hline
	\end{tabular}
	\caption{Correspondences between CAGD and mathematical  physics}
	\label{tab:list}
\end{table}

The described relationship between B\'{e}zier curves and physics is more than a formal analogy; rather, it is based on rigorous mathematical arguments involving symplectic geometry and Lie group theory, as will be described in detail in an expository companion paper \cite{vaitkus2018physics}. These areas of mathematics (to the author's knowledge) have not been employed before in the context of Bernstein-B\'{e}zier representations. Introducing CAGD researchers to these mathematical tools is one of our major goals.
 
The presented physical interpretation also applies to B\'{e}zier surfaces and volumes (tensor products, simplices, and their toric generalizations \cite{krasauskas2002toric} as well), and can be described most generally using the language of group representation theory -- these will be the topic of planned follow-up papers \cite{vaitkus2018surface,vaitkus2018repthy}. We note that concepts and ideas originating from theoretical physics have come to play a major role in many mathematical fields, including differential and algebraic geometry, and topology \cite{atiyah2010geometry}. We hope that the presented physical interpretations can be similarly useful in the study and development of control point schemes in geometric design.

This paper is structured as follows. First, in \autoref{sec:motivation} we give a brief summary of the main aspects of the theory of B\'{e}zier curves that we aim to explain in a unified framework. Then, in \autoref{sec:rotational} we describe some basic facts about the classical and quantum physics of rotational motion, pointing out similarities with the mathematics of B\'{e}zier curves. In \autoref{sec:spinfacts}, we delve deeper into the quantum mechanics of spin, revealing further connections with the theory of polar forms and binomial distributions. The established correspondences are summarized in \autoref{sec:summary}. In \autoref{sec:bezier2physics}, \autoref{sec:classicalbasics} and \autoref{sec:bezierquantum} we sketch a mathematical explanation for these connections, using the theory of Hamiltonian mechanics and geometric quantization. In \autoref{sec:bezier_osc} we present an equivalent description of spin systems in terms of harmonic oscillators, as a physical analogue of P\'{o}lya's urn models. In \autoref{sec:poisson_osc} Poisson curves and the analytical blossom are related to harmonic oscillators. A survey in \autoref{sec:conc} of the many possible avenues for future research concludes the paper.

\section{Motivation and Background}\label{sec:motivation}
We are concerned with \emph{B\'{e}zier curves} -- vectors of degree-$d$ polynomials, that are of the form:
\begin{align}
\mathbf{P}:\ \left[ 0,1 \right]  & \longrightarrow \mathbb{R}^{n}\\
t &\longmapsto \mathbf{P}(t) = \sum_{i = 0}^{d}\mathbf{P}_{i}\beta_{i}^{d}(t).
\end{align}
The functions
\begin{align}
\beta_{i}^{d}(t) = {d \choose i}(1-t)^{d-i}t^{i} &\ \  i = 0, \ldots d
\end{align}
are called the \emph{Bernstein polynomials}, and constitute a basis for the $(d+1)$-dimensional space of degree-$d$ polynomials over $\mathbb{R}$. These polynomials have the remarkable properties
\begin{align}
\beta_{i}^{d}(t) &\geq 0;\qquad i = 0, \ldots d;\ t \in \left[0,1\right] \tag{Positivity}  \\
\sum_{i = 0}^{d}\beta_{i}(t) &= 1;\qquad t \in \mathbb{R} \tag{Partition of Unity}
\end{align}

We give an overview of some important aspects of Bernstein-B\'{e}zier representations -- for a thorough survey see \cite{farouki2012bernstein}.
\subsection{Control Points -- Geometric Design}\label{sec:motivation:bezier}
B\'{e}zier curves are useful for many applications due to the fact that in a Bernstein basis, the coefficients $\mathbf{P}_{i} \in \mathbb{R}^{n}$ have a clear geometric interpretation as \emph{control points}. 

These points form the \emph{control polygon} for the curve, as shown in \autoref{fig:beziercurve}. Since the Bernstein bases are pointwise positive and form a partition of unity, the curve points are convex combinations of the control points, and the curve is contained in the convex hull of the control polygon.

\begin{figure}[h]
	\centering
	\begin{subfigure}{0.47\textwidth}
		\includegraphics[width=\textwidth, keepaspectratio]{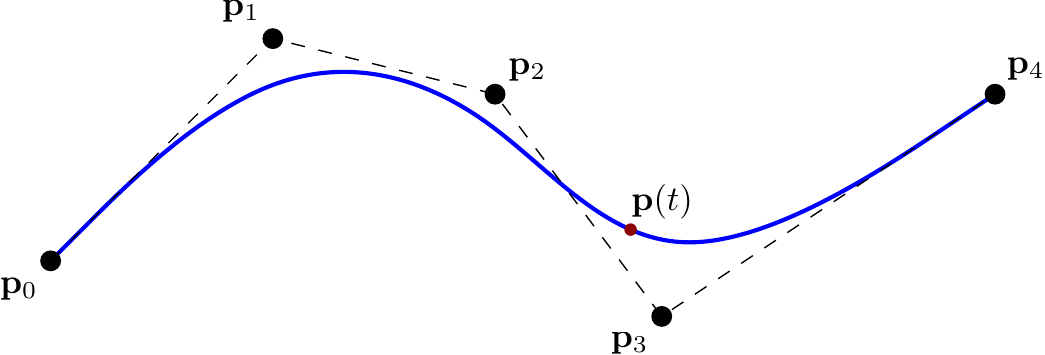}
		\subcaption{B\'{e}zier curve of degree 4}
	\end{subfigure}\ \ 
	\begin{subfigure}{0.3\textwidth}
		\includegraphics[width=\textwidth, keepaspectratio]{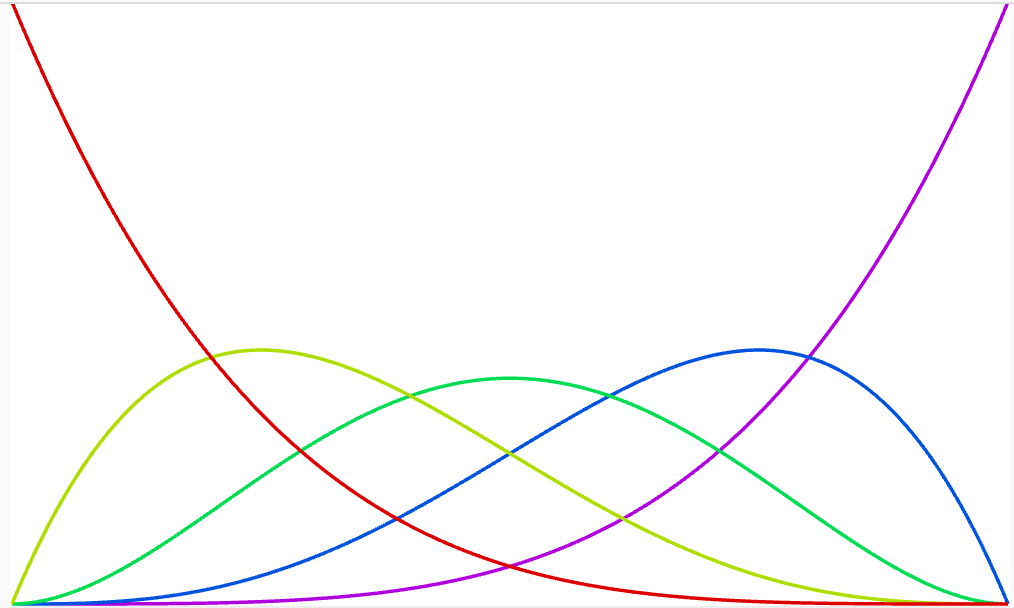}
		\subcaption{Bernstein basis of degree 4.}
	\end{subfigure}
	\caption{B\'{e}zier curves and Bernstein bases}
	\label{fig:beziercurve}
\end{figure}

\subsection{Polar Forms}\label{sec:motivation:polar}
The modern theory of B\'{e}zier curves and Bernstein bases is based on the concept of polar forms \cite{ ramshaw1987blossoming,ramshaw1989blossoms,goldman2003polar}. Given a univariate polynomial $F(t)$ of degree $d$, one can associate to $F(t)$ a $d$-variate function $f(t_{1},\ldots t_{d})$, called its \emph{polar form} (or \emph{blossom}) uniquely defined by the following properties:
\begin{itemize}
	\item \emph{multiaffinity} -- $f(\ldots,\lambda t_{i}+\mu s_{i},\ldots) = \lambda f(\ldots,t_{i},\ldots) + \mu f(\ldots,s_{i},\ldots),\quad  1 \leq i \leq d,\quad \lambda + \mu = 1$ 
	\item \emph{symmetry} -- $f(t_{1},t_{2},\ldots,t_{d}) = f(t_{\sigma(1)},t_{\sigma(2)},\ldots,t_{\sigma(d)})$, where $\sigma$ is any permutation on $d$ elements.
	\item \emph{diagonal property} -- $f(t,t,\ldots,t) = F(t)$
\end{itemize}

For a degree-$d$ B\'{e}zier curve $\mathbf{P}(t) = \sum_{i=0}^{d}\beta^{d}_{i}(t)\mathbf{P}_{i}$, the vector of polar forms $\mathbf{p}(t_{1},t_{2}\ldots,t_{d})$ evaluates to the curve itself on the diagonal, and gives the control points for special inputs:
\begin{align}\label{eq:polarcontrol}
\mathbf{p}(\underbrace{0, \ldots, 0}_{d-i}, \underbrace{1,\ldots,1}_{i}) = \mathbf{P}_{i}
\end{align}
This property allows us to assign \emph{polar labels} to the control points -- see \autoref{fig:polar}.  Combined with the defining properties, efficient algorithms can be devised for evaluation, differentiation, subdivision and degree elevation.

\begin{figure}[h]
	\centering
	\includegraphics[width=0.6\textwidth, keepaspectratio]{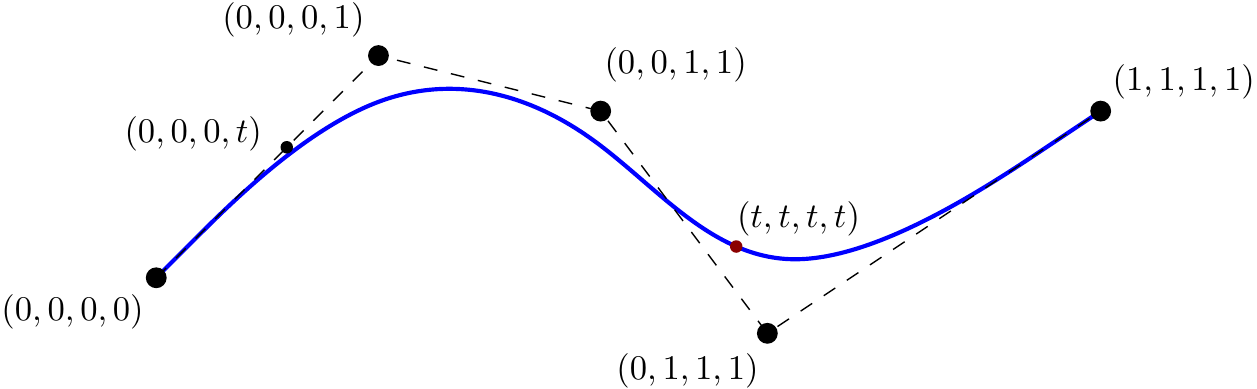}
	\caption{Degree-4 B\'{e}zier curve and control polygon, with polar labels}
	\label{fig:polar}
\end{figure}

Polar forms arise from the mathematical theory of \emph{symmetric tensor products} -- these have been developed into a general  framework by Ramshaw \cite{ramshaw2001paired}.

\subsection{Algebraic Geometry -- Normal Curves and Toric Varieties}\label{sec:motivation:newton}

Bernstein-B\'{e}zier representations are also related to algebraic geometry, in particular to toric varieties and Newton polytopes \cite{sottile2015algebraic}. Let us consider Bernstein polynomials in a slightly generalized way, by extending their domain from $\mathbb{R}$ to the entire projective line $\mathbb{RP}^{1}$.  This extension means reinterpreting the terms $t,1-t$ as homogeneous barycentric coordinates $[u:v] = [t:1-t]$. Bernstein polynomials are then proportional to the  degree-$d$ monomials in two variables: $\beta^{d}_{i}(u,v) \propto u^{i}v^{d-i}$. Define a parametric curve in $d$-dimensional projective space, with homogeneous coordinates parameterized by the degree-$d$ Bernstein polynomials:
\begin{align}\label{eq:veronese} \notag \mathbb{RP}^{1} & \longrightarrow \mathbb{RP}^{d} \\
\left[u : v\right] & \longmapsto \underbrace{[u^{d} : {d \choose 1}u^{d-1}v : {d \choose 2}u^{d-2}v^{2} : \ldots : {d \choose d-1}u v^{d-1} : v^{d}]}_{\text{all degree-$d$ monomials in $u,v$}}. \end{align}
This curve is called the (rational) \emph{normal curve} of degree $d$. These curves are prototypes of B\'{e}zier curves -- any B\'{e}zier curve of a given degree in any dimensions is an affine projection of the corresponding normal curve, see \autoref{fig:normalcurve}. Normal curves have been employed in the past for the purposes of curve classification \cite{degen1988some,pottmann1992classification}, to give a geometric interpretation of polar forms \cite{mazure1999blossoming}, to construct splines with geometric continuity \cite{seidel1993polar}, and to generalize B\'{e}zier curves to fractals \cite{goldman2004fractal}, or function spaces other than polynomials \cite{pottmann1993geometry}. 

\begin{figure}[h]
	\centering
	\includegraphics[width=0.6\textwidth, keepaspectratio]{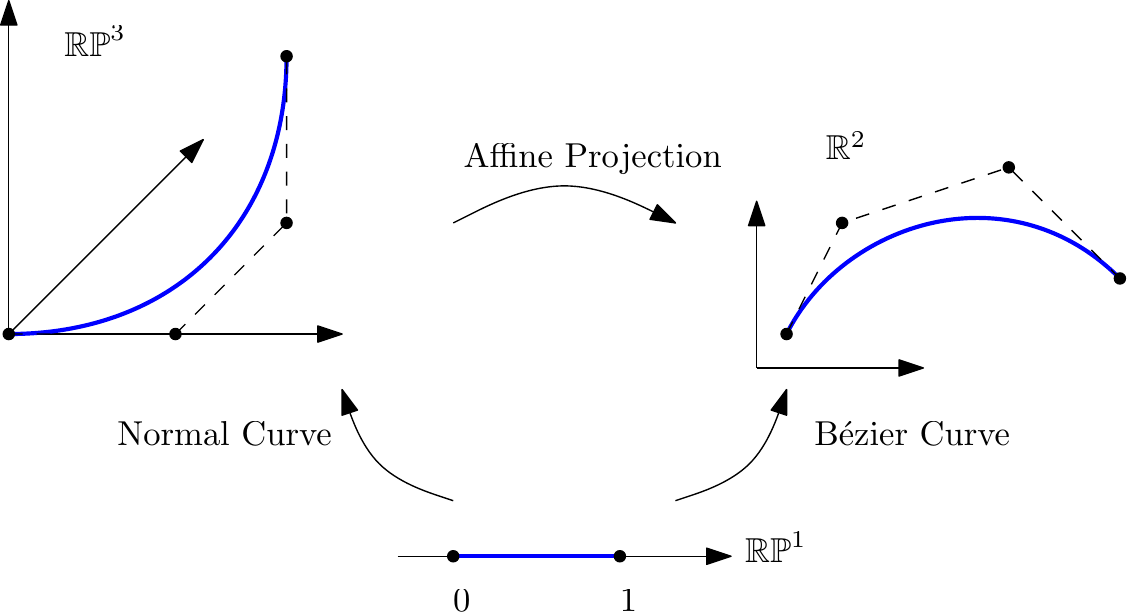}
	\caption{Normal curves}
	\label{fig:normalcurve}
\end{figure}

Algebraic varieties parameterized by monomial functions, such as rational normal curves, are examples of \emph{toric varieties}, which are usually studied over the complex numbers \cite{danilov1978geometry,cox2003what,cox2011toric}. The positive real slice of a complex toric variety is homeomorphic to its original parametric domain via the \emph{algebraic moment map} \cite{sottile2003toric}. The same is true for B\'{e}zier tensor products, simplices, and toric varieties in general. Toric varieties constructed from general subsets of monomials are used to define multi-sided generalizations of B\'{e}zier surfaces by Zub\'{e} and Krasauskas \cite{zub2000n,krasauskas2001shape,krasauskas2002toric} who introduced \emph{toric patches}, generalizing earlier ideas of Warren \cite{warren1992creating,warren1994multi,warren1994bound}. Toric variety theory has been employed in many works to analyze the properties of Bernstein-B\'{e}zier representations and their generalizations \cite{karciauskas1999comparison,cox2003universal,krasauskas2005universal,krasauskas2006bezier,craciun2010some,garcia2010linear,sottile2011injectivity,garc2011toric,sun2015g1}.

The relation with toric varieties also connects Bernstein-Bezier representations to Newton polytopes \cite{khovanskii1992newton,atiyah1983angular}. The Newton polytope of a polynomial in $n$ variables is the convex hull of the integer lattice points in $\mathbb{Z}^{n} \subset \mathbb{R}^{n}$ defined by the exponents of monomials with non-zero coefficients.\footnote{Note that for homogeneous polynomials, the Newton polytope is only $(n-1)$-dimensional.} For a B\'{e}zier curve of degree $d$ the Newton polytope is the segment $[0,d]$, which is also its natural parametric domain  -- see \autoref{fig:newtonpoly}.  The interplay between algebraic geometry and polytopes is characteristic of toric varieties \cite{ewald1996combinatorial}.

\begin{figure}[h]
	\centering
	\begin{subfigure}{0.47\textwidth}
		\includegraphics[width=\textwidth, keepaspectratio]{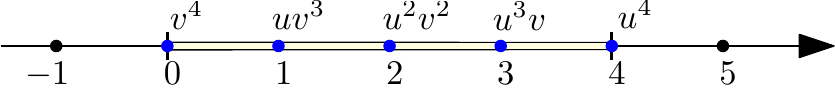}
		\subcaption{Newton polytope of a degree-4 B\'{e}zier curve}
	\end{subfigure}

	\begin{subfigure}{0.28\textwidth}
		\includegraphics[width=\textwidth, keepaspectratio]{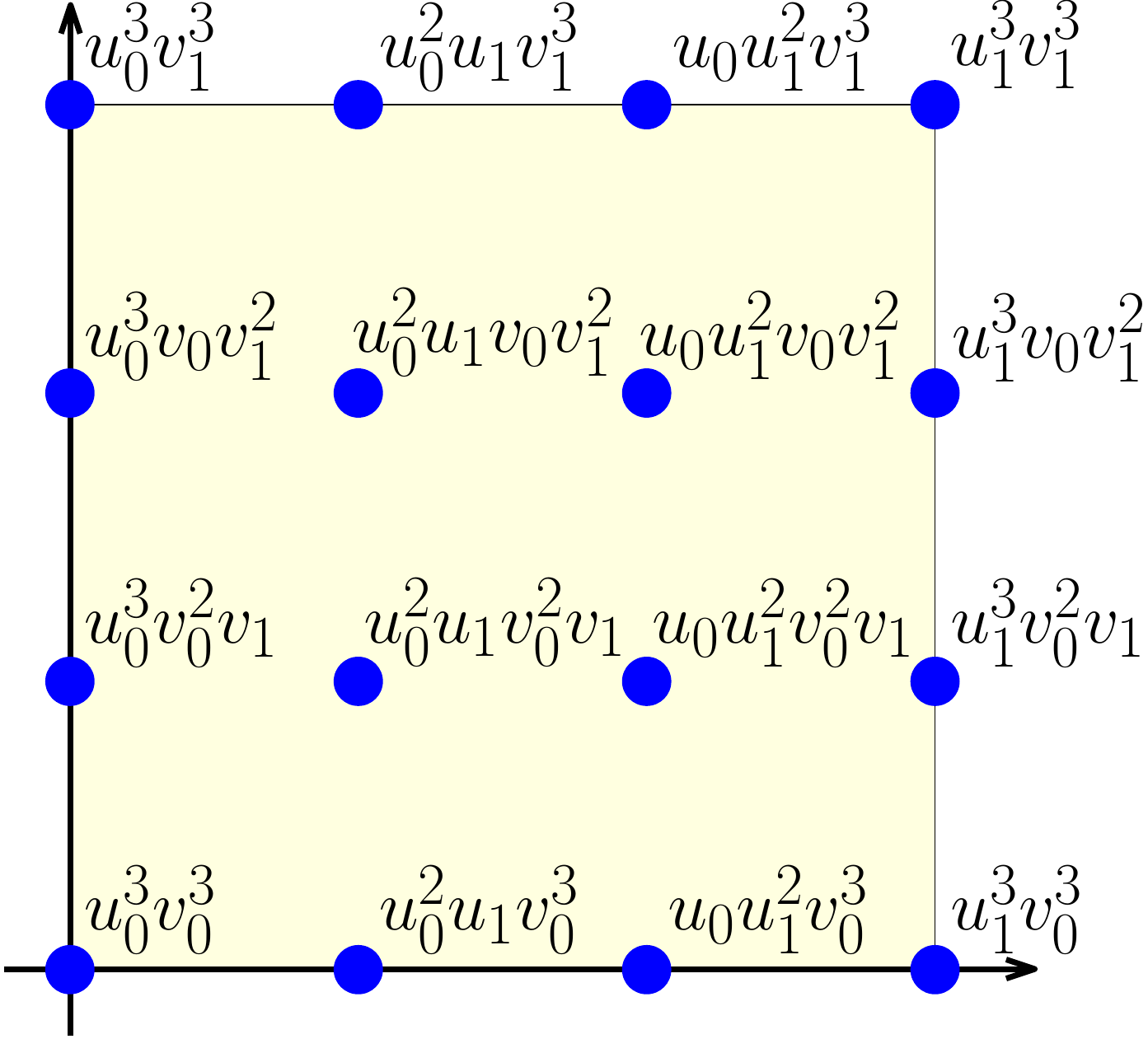}
		\subcaption{Newton polytope of B\'{e}zier tensor product of bi-degree $(3,3)$}
	\end{subfigure}\ \ \ \ 
	\begin{subfigure}{0.28\textwidth}
		\includegraphics[width=\textwidth, keepaspectratio]{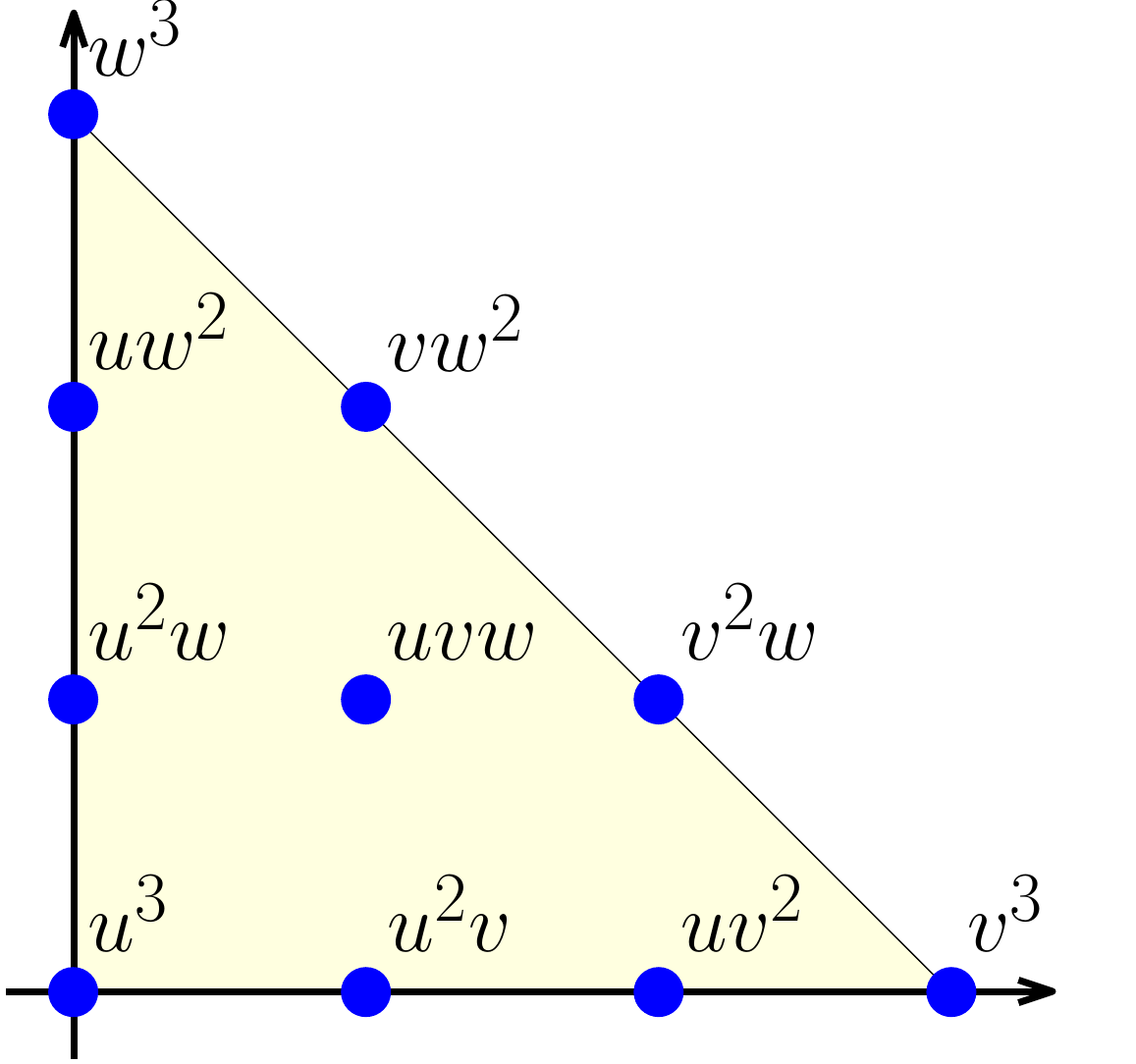}
		\subcaption{Newton polytope of B\'{e}zier triangle of degree $3$}
	\end{subfigure}
	\caption{Newton polytopes}
	\label{fig:newtonpoly}
\end{figure}

\subsection{Probability}\label{sec:motivation:prob}
B\'{e}zier curves have an obvious probabilistic interpretation in terms of \emph{binomial distributions}. Given a binary random variable, such as the outcome of a (possibly biased) coin flip, with probabilities $P(\mathrm{heads}) = p$ and $P(\mathrm{tails}) = 1-p$, the probability of getting exactly $k$ heads, out of $n$ trials is $P(\left| \mathrm{heads}\right| = k) = {n \choose k} (1-p)^{n-k}p^{k} $, which is a Bernstein polynomial. The control points of a degree-$d$ curve thus represent the possible outcomes (values of the random variable) after $n = d$ trials, and a point on the curve for parameter $t$ is a convex combination (expectation) corresponding to a binomial distribution with probability $p = t$. Goldman considered various generalizations of B\'{e}zier curves using P\'{o}lya's urn models \cite{goldman1985polya}. Related generalizations were proposed using umbral calculus \cite{winkel2001generalized,winkel2014generalization,winkel2015generalization,winkel2016generalization} and quantum\footnote{"Quantum" in this context appears to be unrelated to the quantum mechanics relevant to our work.} calculus \cite{orucc2003q,goldman2012formulas,goldman2015quantum}.

\begin{figure}[h]
	\centering
	\includegraphics[width=0.35\textwidth, keepaspectratio]{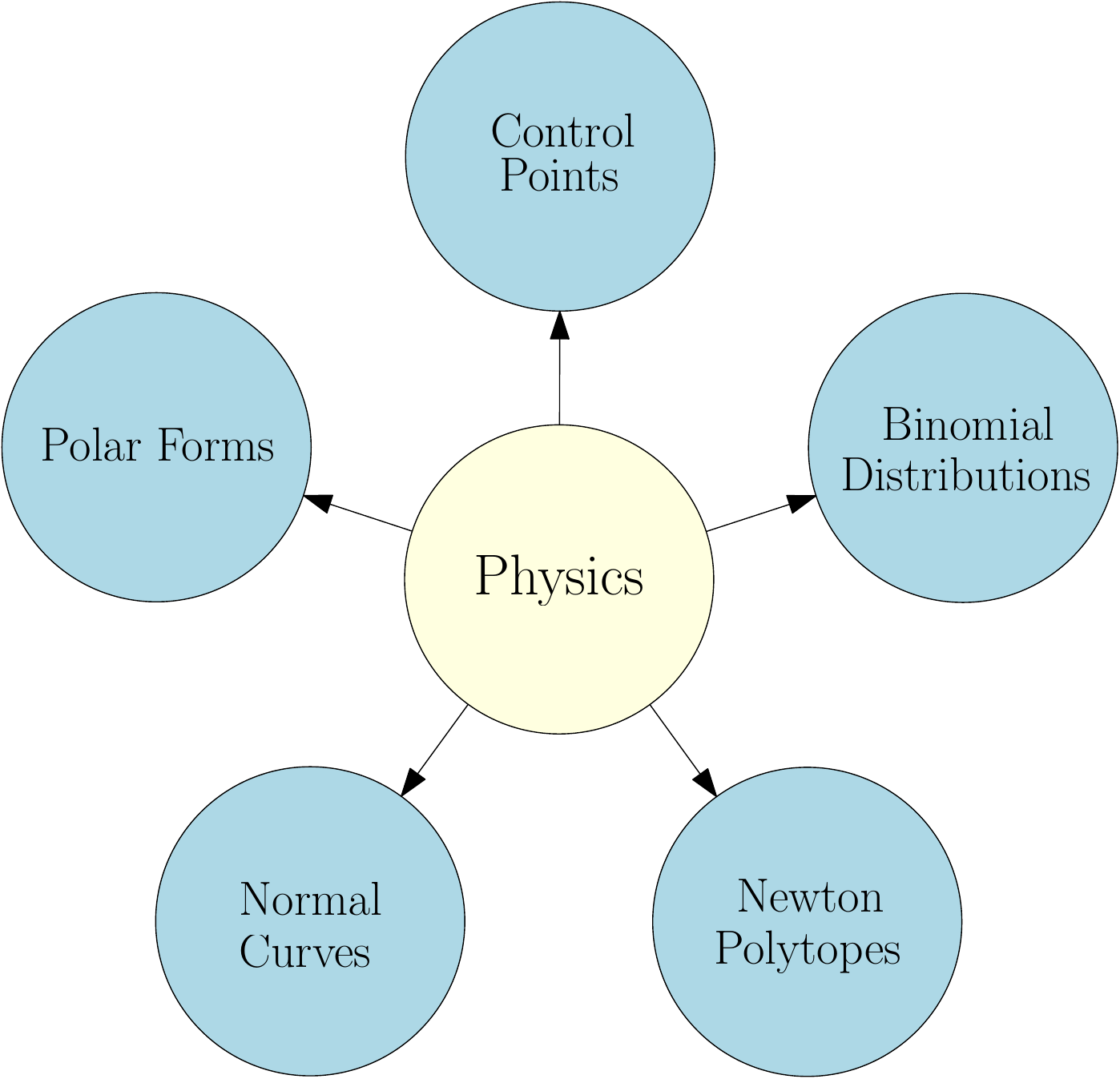}
	\caption{Aspects of B\'{e}zier theory arise from physics}
	\label{fig:circles}
\end{figure}

\section{The Physics of Rotational Motion}\label{sec:rotational}
We propose a theoretical framework that incorporates all the different aspects of Bernstein-B\'{e}zier theory. This framework is based on concepts from theoretical physics, with both classical and quantum mechanics playing a role -- see \autoref{fig:circles}. The relationship between toric varieties and physics has been explored in depth by physicists and mathematicians \cite{atiyah1983angular,guillemin1994moment,da2003symplectic}, but it is yet to be utilized in the context of CAGD. Geometric constructions involving normal curves have been used to analyse quantum mechanical systems \cite{brody2001geometric,brody2012quantum} -- see also the monograph \cite{bengtsson2006geometry} -- the relations with the Bernstein-B\'{e}zier theory, however, are not discussed. 

\subsection{Classical Angular Momentum  and Precession}\label{sec:rotational:classical}
The \emph{angular momentum} with respect to the origin of a particle with mass $m$ and velocity vector $\mathbf{v}$, is defined as the cross product of its position vector $\mathbf{r}$ and momentum vector $\mathbf{p} = m\mathbf{v}$:
\begin{align}
\mathbf{L} = \begin{bmatrix}
L_{x} \\
L_{y} \\
L_{z}
\end{bmatrix} := \mathbf{r} \times \mathbf{p}
\end{align}
Angular momentum is a 3-dimensional vector pointing in the direction of the axis of rotation (with counter-clockwise motion defined as positive), and its magnitude is proportional to the speed of rotation and the moment of inertia. For rigid bodies,  the constitutive angular momenta are integrated into a single vector -- see \autoref{fig:gyro:def}.

\begin{figure}[h]
	
	\centering
	\begin{subfigure}{0.25\textwidth}
		\includegraphics[width=\textwidth, keepaspectratio]{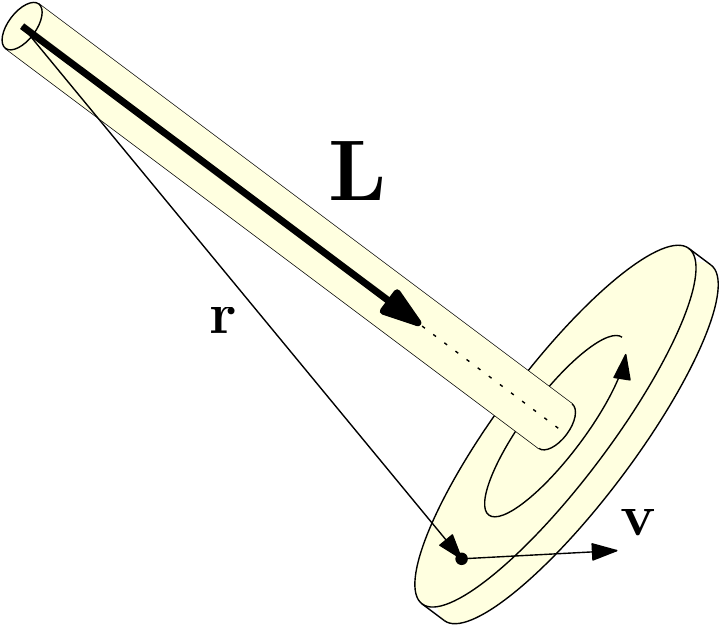}
		\subcaption{Definition}\label{fig:gyro:def}
	\end{subfigure}\ \ \ \ \ \ 
	\begin{subfigure}{0.32\textwidth}
		\includegraphics[width=\textwidth, keepaspectratio]{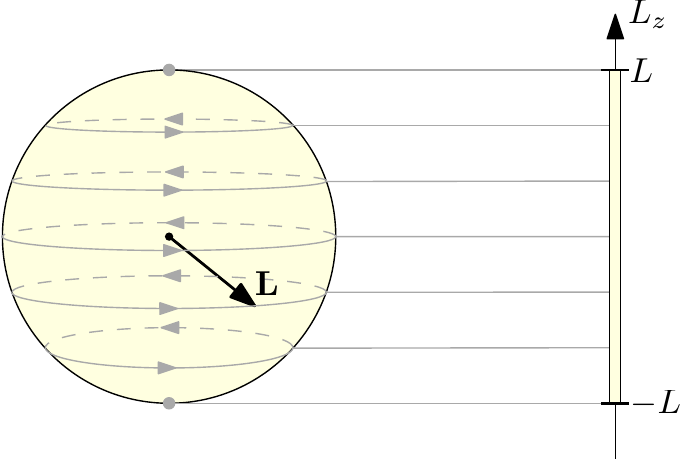}
		\subcaption{Precession}\label{fig:gyro:prec}
	\end{subfigure}
	\caption{Angular momentum}
	\label{fig:gyro}
\end{figure}

A well-known phenomenon involving angular momentum is the \emph{precession} of a spinning top or a gyroscope, i.e. when a rigid body spinning with some angular momentum about an axis is  also subjected to torque due to e.g. gravity. The torque acts perpendicular to the angular momentum and causes the spin axis to precess about the direction of the applied force.

Assuming the angular momentum magnitude to be a fixed value $L = \left| \mathbf{L} \right| $, the precession can be visualized as circular motion along a latitude of a sphere of radius $L$. The precession latitude depends on the $z$-component of the angular momentum vector, which is a height function over the sphere -- see \autoref{fig:gyro:prec}. We will later demonstrate that the range of possible values is related to the domain of a B\'{e}zier curve.

\subsection{Spin and Basics of Quantum Mechanics}\label{sec:rotational:quantum}
We next describe how angular momentum is described by quantum mechanics -- see e.g. \cite{penrose2006road,townsend2000modern}. Elementary particles, such as electrons have an intrinsic angular momentum, called \emph{spin} \cite{weiss2001spin}. A spinning particle that is electrically charged will also have a magnetic moment in the direction of the rotation axis. Such a particle behaves as a bar magnet when put in a magnetic field, i.e. the particle experiences torque aligning its moment with the applied field.

Classical physics suggests that when a spinning particle is moved through the poles of a magnet with a certain non-homogeneous magnetic field, the particle gets deflected from its trajectory, depending on how well its magnetic moment is aligned with the external force lines. At microscopic scales, however, physics is quite different. When a spinning particle moves through such a magnetic field, it deflects vertically either upwards or downwards by a fixed amount, as if the component of its angular momentum in the $z$-direction would be either $L_z = +\frac{1}{2}$ ("spin-up") or $L_z = -\frac{1}{2}$ ("spin-down"), with no other possibilities\footnote{We adopt natural units, so that Planck's constant is $\hbar = 1$.} -- see \autoref{fig:spin12}. The two values are measured with certain probabilities $p^{z}_{\uparrow}, p^{z}_{\downarrow}$, which depend on the initial orientation of the spin axis. Furthermore, after the measurement, the spin axis changes according to the (random) measured value.

\begin{figure}[h]
	\begin{subfigure}{0.48\textwidth}
		\centering
		\includegraphics[width=\textwidth, keepaspectratio]{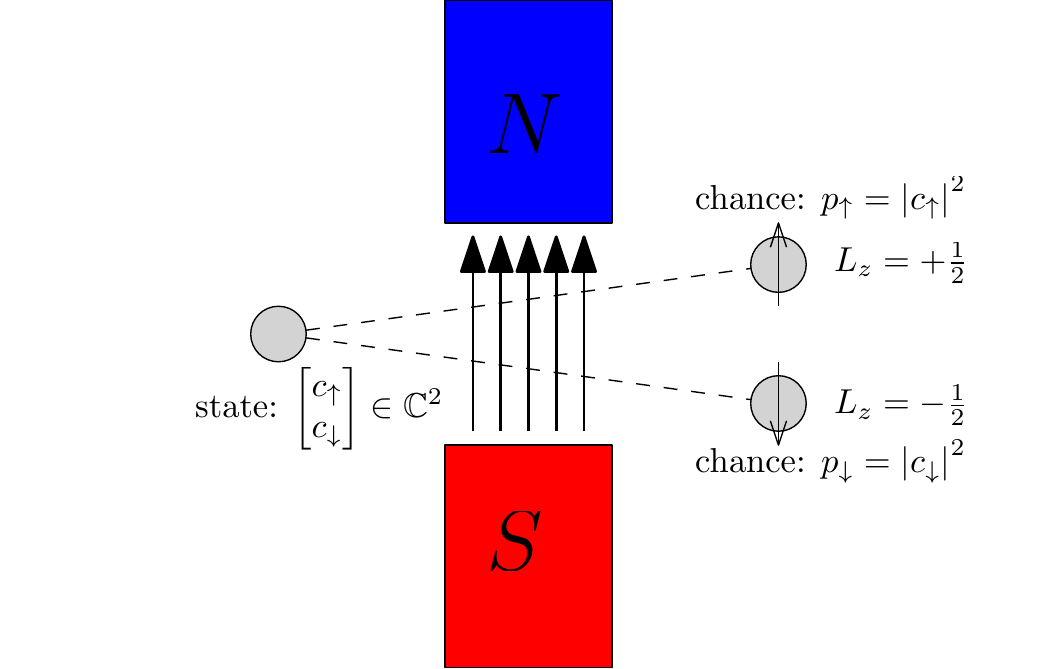}
		\subcaption{Spin-$\frac{1}{2}$ system}
		\label{fig:spin12}
	\end{subfigure}
	\begin{subfigure}{0.48\textwidth}
		\centering
		\includegraphics[width=\textwidth, keepaspectratio]{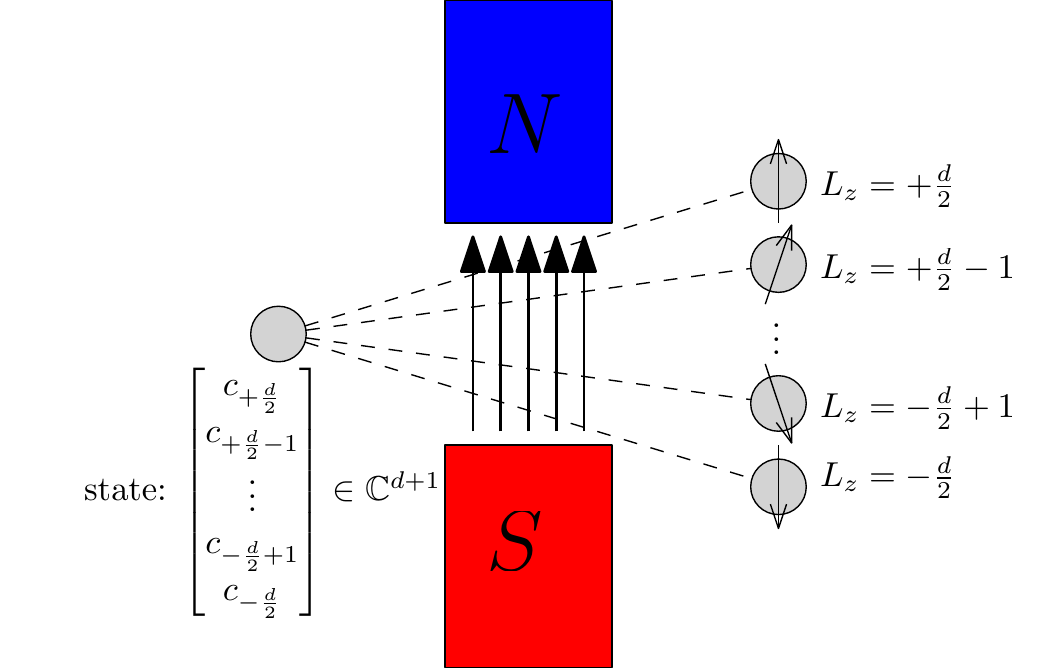}
		\subcaption{Spin-$\frac{d}{2}$ system}
		\label{fig:spind2}
	\end{subfigure}
	\caption{Measuring a component of spin}
\end{figure}

The same phenomena are observed along any other measurement axis, and the probabilities for the different axes are not independent. In fact, the probability distribution along \emph{any and all} axes can be encoded using only two complex numbers $c_{\uparrow}, c_{\downarrow}$, called \emph{probability amplitudes}, the squared magnitudes of which equal the probabilities: $p_{\uparrow} = \left| c_{\uparrow} \right|^{2}, p_{\downarrow} = \left| c_{\downarrow} \right|^{2}$. Thus, the quantum mechanical state of the electron spin is fully specified by a 2-dimensional complex vector
\begin{align}
\ket{\psi} = \begin{bmatrix}
c_{\uparrow} \\
c_{\downarrow}
\end{bmatrix} \in \mathbb{C}^{2}.
\end{align}
We use the standard \emph{bra-ket} notation for quantum state vectors -- a "ket" $\ket{\psi}$ denotes a column vector, a corresponding "bra" is the conjugate transpose $\bra{\psi} = (\overline{\ket{\psi}})^{T}$, and a "bra-ket" $\braket{\psi|\varphi}$ is an inner product.

Formally, a quantum mechanical state is specified by an element of a \emph{Hilbert space} -- a (possibly infinite-dimensional) complex vector space $\mathcal{H}$, equipped with an inner product. A priori, the elements of these Hilbert spaces are just abstract vectors, with no canonical coordinate representation. Measuring some observable, like the angular momentum component along an axis, implies choosing an orthonormal basis for the Hilbert space and expressing its elements in terms of complex coordinates. These complex coordinates are probability amplitudes, encoding the probabilities of measuring each of the possible values for the observable. The basis vectors $ \left\lbrace \ket{\psi^{i}} \in \mathcal{H}\right\rbrace  $ together with the corresponding measured values $ \left\lbrace \lambda_{i} \in \mathbb{R} \right\rbrace  $ are interpreted as the \emph{eigenvectors} and \emph{eigenvalues} of a \emph{linear operator} (matrix) $O: \mathcal{H} \rightarrow \mathcal{H}$, which are self-adjoint (or Hermitian) since the eigenvalues are real numbers, and the eigenvectors form an orthonormal set:
\begin{align}
O \ket{\psi^{i}} &= \lambda_{i}\ket{\psi^{i}}, \\
(\bar{O})^{T} &= O.
\end{align}
The basis states are thus referred to as \emph{eigenstates}. 

The electron spin is an example of a spin-{$\frac{1}{2}$} quantum system, also known as a \emph{qubit} (quantum bit). These qubits are described by a 2-dimensional Hilbert space, and any observable -- corresponding to the spin components along some direction -- can be expressed in terms of the \emph{Pauli spin matrices} representing measurements along the $x,y,z$ axes:
\begin{align}\label{eq:paulispin}
\sigma_{x} 
&= 
\frac{1}{2}\begin{bmatrix}
0 & 1 \\ 
1 & 0 
\end{bmatrix} 
&
\sigma_{y}
&= 
\frac{1}{2}\begin{bmatrix}
0 & -i \\ 
i & 0 
\end{bmatrix} 
&
\sigma_{z}
&= 
\frac{1}{2}\begin{bmatrix}
1 & 0 \\ 
0 & -1 
\end{bmatrix}.
\end{align}
These matrices all have eigenvalues $\pm \frac{1}{2}$ and $\sigma_{z}$ in particular has eigenvectors 
\begin{align}
\ket{\psi_{z}^{+\frac{1}{2}}} 
&=  
\begin{bmatrix}
1 \\
0
\end{bmatrix} 
&
\ket{\psi_{z}^{-\frac{1}{2}}} 
&= 
\begin{bmatrix}
0 \\
1
\end{bmatrix}.
\end{align}
We adopt the standard notation
\begin{align}
\ket{\uparrow} &:= \ket{\psi_{z}^{+\frac{1}{2}}}  & \ket{\downarrow} &:= \ket{\psi_{z}^{-\frac{1}{2}}} 
\end{align}
for the \emph{spin-up} and \emph{spin-down} eigenstates along the $z$-axis. Thus, a spin-$\frac{1}{2}$ state can be written as
\begin{align}
\ket{\psi} = c_{\uparrow}\ket{\uparrow} + c_{\downarrow}\ket{\downarrow}.
\end{align}

Systems with spin higher than $\frac{1}{2}$ also exist -- all positive half-integers $\frac{d}{2}, d \in \mathbb{N}$ are possible values. When a component of angular momentum is measured for a spin-$\frac{d}{2}$ system, the result is one of the $d+1$ possible values $(-\frac{d}{2}, -\frac{d}{2}+1, \ldots, +\frac{d}{2}-1, +\frac{d}{2})$ -- see \autoref{fig:spind2}. The quantum mechanical state is described by an element of a $(d+1)$-dimensional Hilbert space, which can be expressed in terms of the eigenstates for e.g. the $z$-axis as
\begin{align}
\ket{\psi} = c_{-\frac{d}{2}}\ket{\psi_{z}^{-\frac{d}{2}}} + c_{-\frac{d}{2} + 1}\ket{\psi_{z}^{-\frac{d}{2} + 1}} + \ldots + c_{\frac{d}{2} -1}\ket{\psi_{z}^{\frac{d}{2}-1}} + c_{\frac{d}{2}}\ket{\psi_{z}^{\frac{d}{2}}}.
\end{align}

The letter $d$ has been chosen deliberately to suggest a relation with the degree of a B\'{e}zier curve. Indeed as we will show later, a degree-$d$ B\'{e}zier curve corresponds to a spin-$\frac{d}{2}$ quantum system, with the $d+1$ control points and Bernstein polynomials giving the $d+1$ probability amplitudes for some special subset of spin-$\frac{d}{2}$ quantum states.

\section{The Bloch Sphere and Coherent States}\label{sec:spinfacts}
\subsection{The Bloch Sphere Representation for Spin-$\frac{1}{2}$}\label{sec:spinfacts:bloch}
We present the standard method to visualize the quantum states of spin systems. A spin-$\frac{1}{2}$ system is described by two complex numbers, which encode probabilities, so their squared magnitudes sum to 1, and only the difference between their complex phases has a physical significance. This interpretation means that the quantum state can be written as
\begin{align}
\ket{\psi} &= 
\begin{bmatrix} z_{1} \\ z_{2} \end{bmatrix} \propto \begin{bmatrix} e^{i\varphi}\cos{\frac{\theta}{2}} \\ \sin{\frac{\theta}{2}} \end{bmatrix}.
\end{align}
The two angles $(\theta,\varphi)$ can be interpreted as spherical coordinates, parameterizing the surface of a unit sphere in 3D. Thus, the quantum state of a spin-$\frac{1}{2}$ can be associated to a point on the \emph{Bloch sphere} -- see \cite[Ch. 5.2]{bengtsson2006geometry}. The $\ket{\downarrow} = \begin{bmatrix}
0 \\ 
1
\end{bmatrix}$ 
and 
$\ket{\uparrow} = \begin{bmatrix}
1 \\ 
0
\end{bmatrix}$ eigenstates correspond to the South and North poles. See \autoref{fig:bloch:sphere}.

\begin{figure}[h]
	\centering
	\begin{subfigure}{0.32\textwidth}
		\includegraphics[width=\textwidth, keepaspectratio]{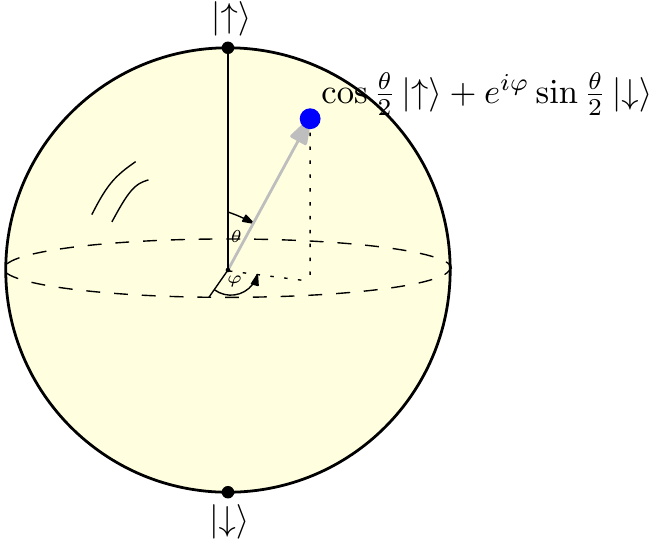}
		\subcaption{Bloch sphere}\label{fig:bloch:sphere}
	\end{subfigure}\ \ \ \ \ \ \ 
	\begin{subfigure}{0.38\textwidth}
		\includegraphics[width=\textwidth, keepaspectratio]{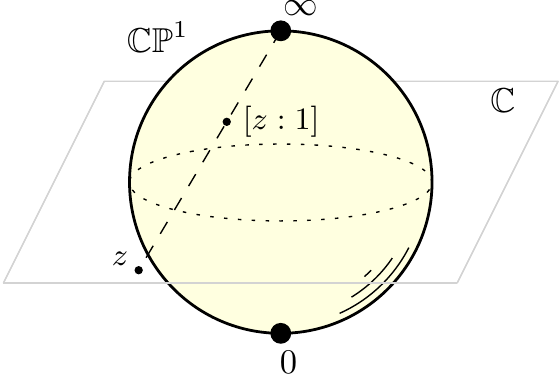}
		\subcaption{Stereographic projection onto a  sphere}\label{fig:bloch:stereo}
	\end{subfigure}

	\begin{subfigure}{0.6\textwidth}
		\includegraphics[width=\textwidth, keepaspectratio]{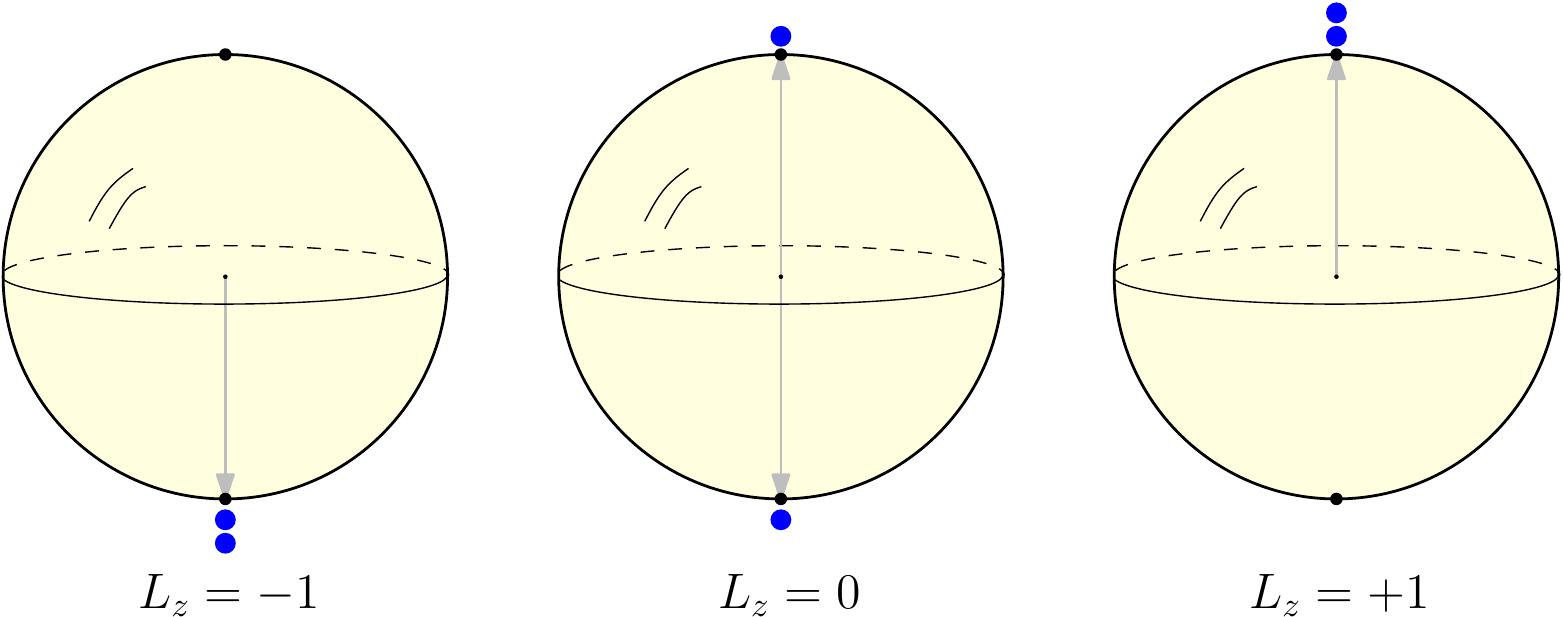}
		\subcaption{Spin-1 eigenstates on the Bloch sphere}\label{fig:bloch:spin1}
	\end{subfigure}
	\caption{}
	\label{fig:bloch}
\end{figure}

\subsection{The Bloch Sphere Representation for Spin-$\frac{d}{2}$}\label{sec:spinfacts:majorana}
A spin-$\frac{d}{2}$ system is described by a $(d+1)$ dimensional complex vector 
$\begin{bmatrix}
	c_{-{\frac{d}{2}}} & c_{-{\frac{d}{2}} + 1} & \ldots & c_{{\frac{d}{2}} + 1} & c_{{\frac{d}{2}}}
\end{bmatrix}$. 
Define a polynomial of degree $d$, with these $d+1$ numbers as coefficients:
\begin{align}\label{eq:spinpoly}
p(z) = c_{-{\frac{d}{2}}}z^{d} + c_{-{\frac{d}{2}}+1}z^{d-1} + \ldots +  c_{+{\frac{d}{2}}+1}z + c_{+\frac{d}{2}}
\end{align}
We now consider the stereographic projection of the complex plane (the complex 1-dimensional line) $\mathbb{C}$, onto the complex projective line $\mathbb{CP}^{1}$, the \emph{Riemann sphere}, parameterized by homogeneous coordinates $[z:w]$ -- see \autoref{fig:bloch:stereo}. The points of the complex plane $z \in \mathbb{C}$ map to points $\left[z : 1 \right]$, while the point at infinity maps to the North pole $[1 : 0]$. The polynomial \eqref{eq:spinpoly} is not  well-defined as a function on $\mathbb{CP}^{1}$, but we can still consider the zeroes of the equivalent homogeneous polynomial of $z$ and $w$:
\begin{align}\label{eq:spinhompoly}
c_{-{\frac{d}{2}}}z^{d} + c_{-{\frac{d}{2}}+1}z^{d-1}w + \ldots +  c_{+{\frac{d}{2}}+1}zw^{d-1} + c_{+\frac{d}{2}}w^{d} = 0.
\end{align}
As a consequence of the fundamental theorem of algebra, such a polynomial will have $d$ roots $(z_{1}, z_{2}, \ldots, z_{d})$ lying on the Riemann sphere (some potentially at infinity, with $w = 0$), -- these points (also called \emph{Majorana stars}) in turn uniquely define the polynomial coefficients and thus the quantum state (up to a scalar factor) \cite[Ch. 22.10]{penrose2006road}. Note that, being roots of a polynomial, the points have no natural ordering.

The eigenstates of a spin-$\frac{d}{2}$ system have only one non-zero polynomial coefficient, and thus all their roots are at the origin (south pole) or at infinity (north pole). Recalling the case of spin-$\frac{1}{2}$ systems and the Bloch sphere, we can label each of the spin-$\frac{d}{2}$ eigenstates by an unordered set of $d$ spin-$\frac{1}{2}$ eigenstates ($\ket{\uparrow}$ and $\ket{\downarrow}$).
\begin{align}
\ket{\psi_{z}^{-\frac{d}{2} + k}}  = \ket{\underbrace{\downarrow \ldots \downarrow}_{d-k}\underbrace{\uparrow \ldots \uparrow}_{k}}
\end{align}
The case of spin-1 eigenstates is shown in \autoref{fig:bloch:spin1}. Observe the similarity with the polar labels of B\'{e}zier control points \eqref{eq:polarcontrol}. The points being unordered is the analogue of the symmetry property of polar forms. We turn to the analogue of the diagonal property next.

\subsection{Spin Coherent States}\label{sec:spinfacts:coherent}
\begingroup
\setlength{\columnsep}{6pt}
\setlength{\intextsep}{0pt}
\begin{wrapfigure}{r}{0pt}
	\includegraphics[width=0.18\textwidth, keepaspectratio]{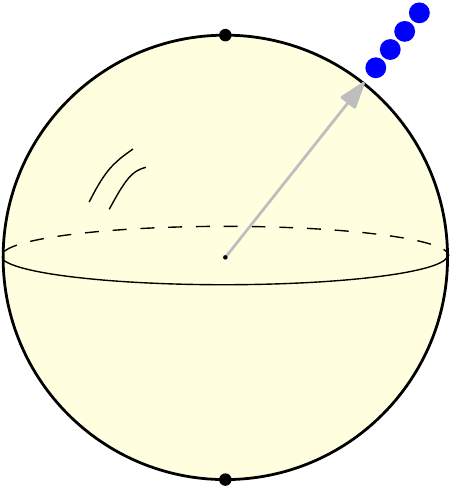}
	\label{fig:cylinder}
\end{wrapfigure}
A well-known feature of quantum mechanics is the Heisenberg uncertainty principle, which states that certain quantities -- the position and velocity of a particle, or its spin along independent axes --  cannot be determined simultaneously to arbitrary precision. This principle is expressed in terms of inequalities, imposing lower bounds on products of variances for different observables. States that attain these bounds are of much interest in physics, since their behaviour is as similar to classical mechanics as allowed by quantum mechanics. These states are called \emph{coherent states}. The formal definition of coherent states requires group theoretical and functional analytical concepts beyond the scope of our paper -- we refer the reader to \cite{gazeau2009coherent,combescure2012coherent} and \cite{bengtsson2006geometry}. The coherent states of a spin-$\frac{1}{2}$ system are easy to describe, since in this case every state happens to be coherent. This coherence is mirrored in the fact that these states are described by a single 3D vector, similar to classical angular momentum. Systems of higher spin can be described as collections of spin-$\frac{1}{2}$ systems, and states that have all $d$ spin-$\frac{1}{2}$ directions pointing in the same direction -- as shown in the inset for $d = 4$ -- are suspected to be the coherent states, in analogy with a polar form evaluating to the B\'{e}zier curve along the diagonal. This is in fact the case -- we refer to \cite[Ch. 7.2.]{bengtsson2006geometry} for a proof. Furthermore the following is true:

\endgroup
\begin{theorem}
	A coherent state of a spin-$\frac{d}{2}$ system defines a binomial probability distribution over the eigenstates with event probability $p = \cos^{2}\left( \frac{\theta}{2} \right)$ and $d$ trials.
\end{theorem}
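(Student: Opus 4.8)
The plan is to exploit the characterization---assumed above---of a spin-$\frac{d}{2}$ coherent state as a symmetric tensor product of $d$ identical spin-$\frac{1}{2}$ states, all pointing in a common Bloch direction $(\theta,\varphi)$. First I would write the constituent qubit in the Bloch parameterization, $\ket{\psi} = e^{i\varphi}\cos\frac{\theta}{2}\ket{\uparrow} + \sin\frac{\theta}{2}\ket{\downarrow}$, so that the single-qubit amplitudes are $c_{\uparrow} = e^{i\varphi}\cos\frac{\theta}{2}$ and $c_{\downarrow} = \sin\frac{\theta}{2}$, with spin-up probability $p := \left|c_{\uparrow}\right|^{2} = \cos^{2}\frac{\theta}{2}$. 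Since the spin-$\frac{d}{2}$ Hilbert space is the symmetric tensor product of $d$ copies of the qubit space $\mathbb{C}^{2}$, and the product $\ket{\psi}^{\otimes d}$ already lies in this symmetric subspace, the coherent state is simply $\ket{\psi}^{\otimes d}$.

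The central computation is to expand this product in the eigenbasis $\{\ket{\psi_{z}^{-\frac{d}{2}+k}}\}$, where the $k$-th eigenstate is the normalized symmetric combination of $k$ factors $\ket{\uparrow}$ and $d-k$ factors $\ket{\downarrow}$. Expanding $\ket{\psi}^{\otimes d}$ produces $2^{d}$ ordered sequences; the $\binom{d}{k}$ sequences carrying exactly $k$ copies of $\ket{\uparrow}$ each contribute the same amplitude $c_{\uparrow}^{k}c_{\downarrow}^{d-k}$, and their sum is $\sqrt{\binom{d}{k}}\,\ket{\psi_{z}^{-\frac{d}{2}+k}}$. The binomial theorem then gives
\[
\ket{\psi}^{\otimes d} = \sum_{k=0}^{d} \sqrt{\binom{d}{k}}\, c_{\uparrow}^{k}\, c_{\downarrow}^{d-k}\, \ket{\psi_{z}^{-\frac{d}{2}+k}},
\]
so the probability amplitude for the $k$-th eigenstate is $\sqrt{\binom{d}{k}}\,c_{\uparrow}^{k}c_{\downarrow}^{d-k}$.

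Taking squared magnitudes and using $\left|c_{\downarrow}\right|^{2} = \sin^{2}\frac{\theta}{2} = 1-p$, the probability of measuring the $k$-th eigenstate becomes
\[
p_{k} = \binom{d}{k}\, \left|c_{\uparrow}\right|^{2k}\, \left|c_{\downarrow}\right|^{2(d-k)} = \binom{d}{k}\, p^{k}(1-p)^{d-k}.
\]
This is exactly the binomial distribution over the $d+1$ eigenstates with $d$ trials and event probability $p = \cos^{2}\frac{\theta}{2}$---equivalently, the Bernstein polynomial $\beta_{k}^{d}(p)$---which is the claimed statement.

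The step I expect to be the main obstacle is the normalization bookkeeping. Two factors must be reconciled: the $\binom{d}{k}$ multiplicity counting the distinct orderings with $k$ up-spins, and the $1/\sqrt{\binom{d}{k}}$ normalization of each symmetric eigenstate. These combine into the single factor $\sqrt{\binom{d}{k}}$ appearing in the amplitude, whose square restores the full binomial coefficient in $p_{k}$. Keeping this accounting consistent---so that the binomial coefficient emerges with precisely the right power---is the one place where a stray factor would silently break the final identification with the binomial distribution.
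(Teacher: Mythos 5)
Your proof is correct, but it takes a different computational route from the paper's. The paper works in the Majorana representation set up in its Section 4.2: a coherent state is a polynomial $p(z)=(z-r)^{d}$ with a $d$-fold root on the Riemann sphere, which is expanded by the binomial theorem, dehomogenized via $r=\tan\left(\frac{\theta}{2}\right)e^{i\varphi}$, and rescaled by the overall factor $\cos^{d}\left(\frac{\theta}{2}\right)$ to read off the amplitudes $c_{k}\propto\sqrt{\binom{d}{k}}\cos^{k}\left(\frac{\theta}{2}\right)\sin^{d-k}\left(\frac{\theta}{2}\right)$. You instead take the equivalent characterization of the coherent state as the $d$-fold tensor power $\ket{\psi}^{\otimes d}$ of a single qubit and expand it directly in the normalized symmetric eigenbasis, with the $\binom{d}{k}$ ordering multiplicity and the $1/\sqrt{\binom{d}{k}}$ eigenstate normalization combining into the same $\sqrt{\binom{d}{k}}$ amplitude factor. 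Both arguments rest on the same unproved input (the identification of coherent states with $d$ coinciding qubit directions, cited by the paper to Bengtsson--\.{Z}yczkowski), and both produce the same amplitudes. Your version is more elementary and makes the probabilistic reading transparent --- each qubit is literally one Bernoulli trial with success probability $\cos^{2}\left(\frac{\theta}{2}\right)$ --- whereas the paper's version stays inside the root-polynomial formalism that it uses throughout to connect spin states to normal curves and polar forms; your normalization bookkeeping, which you rightly flag as the delicate step, is handled correctly.
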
 
\begin{proof}
	A coherent state is represented by $d$ confluent points on the Riemann sphere, corresponding to a polynomial with a $d$-fold root
	\begin{align}
	p(z) = (z-r)^{d}.
	\end{align}
	This polynomial can be expanded using the binomial theorem as
	\begin{align}
	p(z) = \sum_{k = 0}^{d}\left({d \choose k}(-1)^{k}r^{d-k}\right)z^{k}.
	\end{align}
	Substituting the dehomogenized complex coordinate $r = \tan{\left( \frac{\theta}{2}\right) }e^{i\varphi}$ then multiplying by the (physically irrelevant) common factor $\cos^{d}\left( \frac{\theta}{2}\right) $, we get
	\begin{align} p(z) = \sum_{k = 0}^{d}\left({\sqrt{d \choose k}}(-1)^{k}\cos^{k}\left(\frac{\theta}{2}\right)\sin^{d-k}\left(\frac{\theta}{2}\right)e^{(d-k)i\varphi}\right)\sqrt{d \choose k}z^{k}. \end{align}
	The coefficients of the scaled monomials are probability amplitudes for the coherent state. The squared absolute values define a binomial probability distribution
	\begin{align} P(k,d) = \left| c_{k} \right|^{2} = {d \choose k}\left(\cos^{2}{\left(\frac{\theta}{2}\right)}\right)^{k}\left(\sin^{2}{\left(\frac{\theta}{2}\right)}\right)^{d-k},\end{align}
	with the claimed event probability.
\end{proof}

\section{B\'{e}zier Curves from Physics -- Summary}\label{sec:summary}
We have seen that the quantum mechanics of spin systems is analogous to the mathematics of B\'{e}zier curves. First, a spin-$\frac{d}{2}$ system and a degree-$d$ B\'{e}zier curve both have $d+1$ degrees of freedom and are related to probability distributions over $d+1$ possible outcomes (control points). A spin-$\frac{d}{2}$ system can be treated as a symmetric ensemble of spin-$\frac{1}{2}$ systems (qubits), which is analogous to how polar forms associate to a degree-$d$ B\'{e}zier curve a $d$-fold linear function. Polar forms evaluate to the control points for certain inputs, while spin eigenstates are equivalent to symmetric ensembles of spin-$\frac{1}{2}$ eigenstates (i.e. spin-ups and spin-downs). Thus, polar labels assigned to B\'{e}zier curves can be interpreted in terms of qubits, as shown in \autoref{fig:polarform_spin}. Spin coherent states are characterized by all the qubits coinciding and also by binomial probability distributions (Bernstein polynomials) -- analogous to the diagonal property of polar forms that defines points on the B\'{e}zier curve. The multiaffinity property of polar forms is reflected in the linearity of quantum state spaces, i.e. that quantum systems can be in linear superpositions of different states. These correspondences are summarized in \autoref{tab:list}. Note that for spin-$\frac{1}{2}$ systems all states are coherent, which is compatible with the fact that a linear B\'{e}zier curve coincides with its control polygon, while for higher spin the only coherent eigenstates are indexed by $\pm\frac{d}{2}$, as a higher degree curve interpolates only its first and last control points. Furthermore, when a curve is degree-elevated iteratively, the control points converge to the curve itself -- on the physics side,  the quantum behaviour of spin-$\frac{d}{2}$ systems becomes dominated by the almost-classical coherent states as $d \rightarrow \infty$ (or equivalently $\hbar \rightarrow 0$) \cite[Ch. 7.5]{gazeau2009coherent}.

\begin{figure}[h]
	\centering
	\begin{subfigure}{0.6\textwidth}
		\includegraphics[width=\textwidth, keepaspectratio]{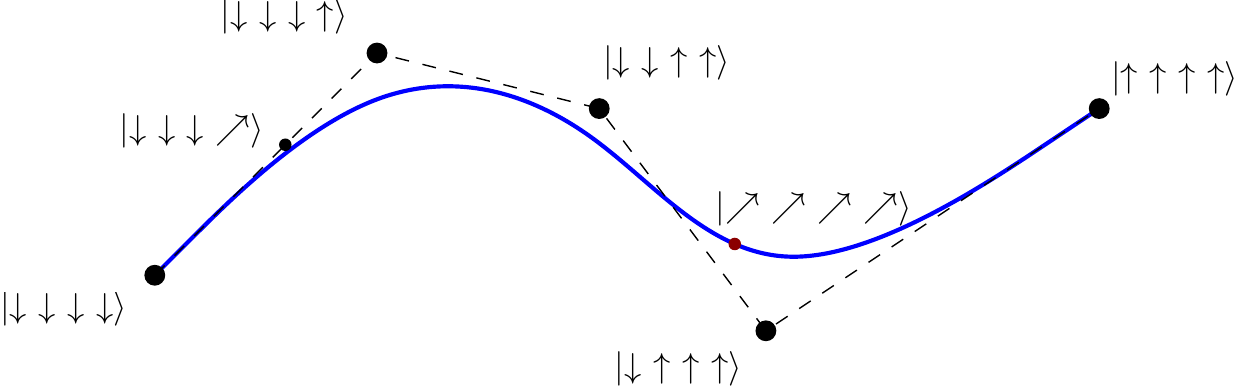}
	\end{subfigure}
	\caption{Polar labels in terms of spin-$\frac{1}{2}$ states}
	\label{fig:polarform_spin}
\end{figure}

\section{From B\'{e}zier Curves to Physics -- Overview}\label{sec:bezier2physics}
Previously, we demonstrated a correspondence between B\'{e}zier curves and theoretical physics. Based on these observations, we can ask the question: is there an underlying reason for this relationship? We will demonstrate that these analogies are consequences of rigorous mathematical arguments indicative of a deeper connection between geometric design and physics, so that a B\'{e}zier curve naturally contains a mechanical system. An overview of our argument can be seen on \autoref{fig:outline}. A more rigorous description of   our argument requires mathematical physical concepts beyond the scope of this paper and will be presented elsewhere \cite{vaitkus2018physics}. We resort to a high-level, informal overview and provide some pointers to the relevant mathematical literature.

First we turn from the actual B\'{e}zier curve in the plane or space, to the corresponding \emph{normal curve}, described in \autoref{sec:motivation:newton}, which is a parametric curve embedded in $d$-dimensional projective space. To work towards a physical interpretation, we extend the parametric domain from the real line to the \emph{complex plane}. The normal curve turns into a 2-dimensional surface, i.e. a sphere embedded in $d$-dimensional complex projective space \cite[Ch. 6.3]{bengtsson2006geometry}. This embedding allows the measurement of areas on the sphere, which is precisely the differential-geometric (\emph{symplectic}) structure that characterizes \emph{phase spaces} of classical mechanical systems in the Hamiltonian description \cite{arnold1989mathematical,hand1998analytical,guillemin1984symplectic}. The mechanical systems corresponding to complex normal curves are precessing \emph{gyroscopes} with a fixed angular momentum \cite[Sec. 3]{stone1989supersymmetry}. The Hamiltonian (total energy) is the natural height function on the sphere and the system evolves by rotations around the North-South axis. The range of energy values, called the \emph{(symplectic) moment map}, gives the parametric domain of the original curve \cite{sottile2003toric,da2003symplectic}. There exists a mathematical procedure called \emph{geometric quantization} that constructs the quantum mechanical equivalent of a classical system \cite{blau1992symplectic,todorov2012quantization,woodhouse1997geometric,hall2013quantum}. For B\'{e}zier curves of degree-$d$, quantization results in a spin-$\frac{d}{2}$ system \cite{nlab2016geometric} -- explaining the coincidences that we observed before. 

\begin{figure}[h]
	\centering
	\includegraphics[width=\textwidth, keepaspectratio]{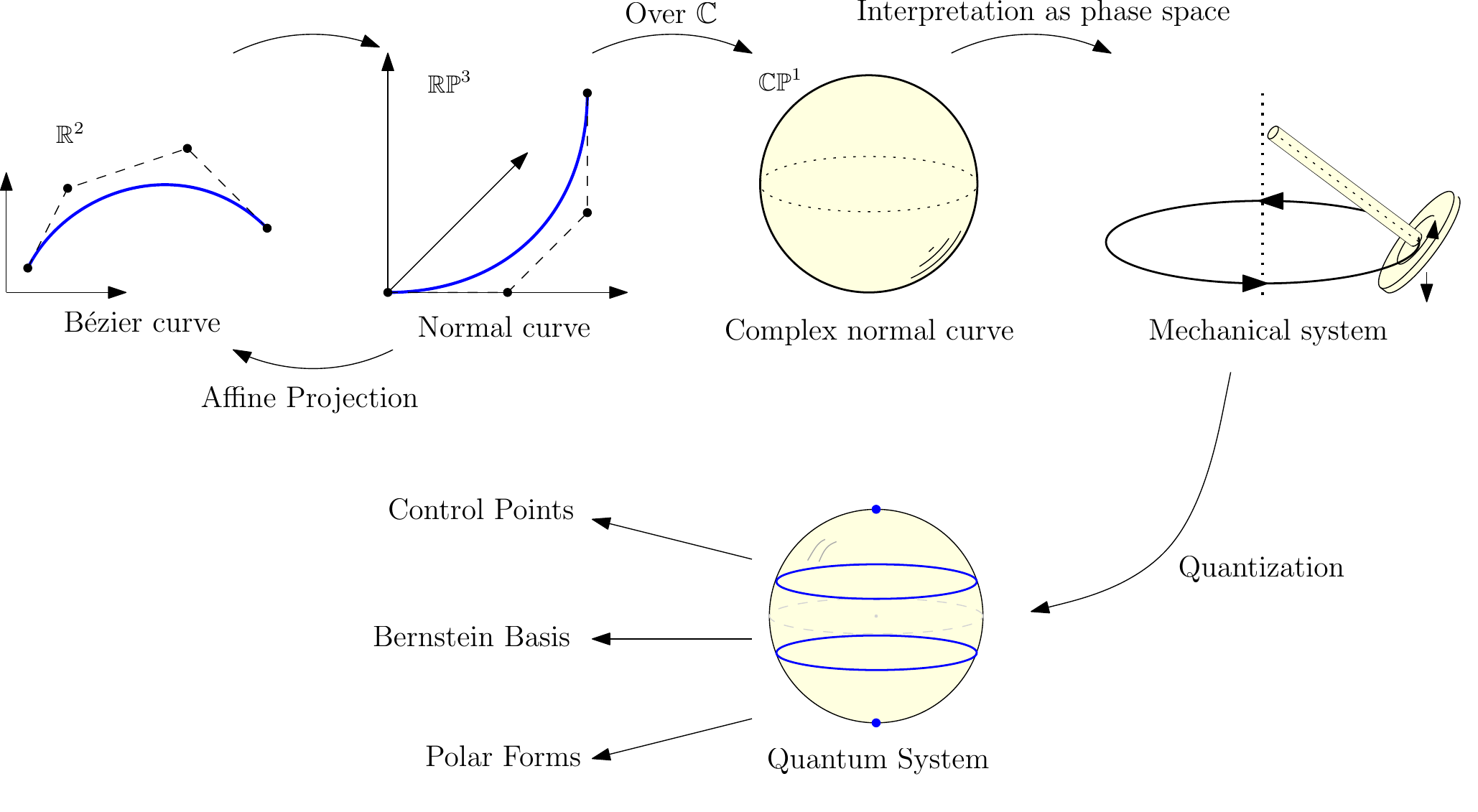}
	\caption{Overview of connection between B\'{e}zier curves and physical systems.}
	\label{fig:outline}
\end{figure}

\subsection{Complex Normal Curves}\label{sec:bezier2physics:Cnormal}
To reveal the physical systems contained in a B\'{e}zier curve, we  consider the corresponding normal curve, and extend its parametric domain to the complex plane. This extension turns the normal curve into a real 2-dimensional surface (a complex curve), which is topologically a 2-sphere via stereographic projection (\autoref{fig:bloch:stereo}).

\begingroup
\setlength{\columnsep}{6pt}
\setlength{\intextsep}{0pt}
\begin{wrapfigure}{r}{0.19\textwidth}
	\includegraphics[width=0.19\textwidth, keepaspectratio]{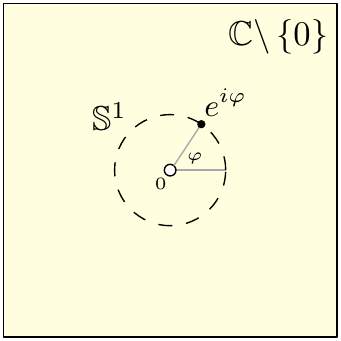}
	\label{fig:cylinder}
\end{wrapfigure}
The sphere contains the complex number plane, so multiplication by non-zero complex numbers $\mathbb{C}^{\ast} = \mathbb{C} \setminus \left\lbrace 0\right\rbrace$ moves points in a continuous way. As a consequence the numbers on the unit circle, i.e. complex numbers of unit magnitude $\mathbb{S}^{1} = \left\lbrace e^{i\varphi}, \varphi \in [0,2\pi) \right\rbrace \subset \mathbb{C}^{\ast}$ -- see the inset figure -- also act on the sphere. Their effect is a circular rotation, which can be visualized in a way that is identical to what we saw on \autoref{fig:gyro:prec} for a precessing gyroscope. This action suggests a connection with the physics of rotational motion, and is part of what we will explain in the subsequent chapters.

\endgroup
The complex normal curve is to be considered not just as some abstract manifold, as it is embedded into complex projective space using \eqref{eq:veronese}, which induces a metric geometric structure on it:
\begin{theorem}\label{thm:round}
	The degree-$d$ Bernstein polynomials embed $\mathbb{CP}^{1}$ into $\mathbb{CP}^d $ as a sphere of radius $\frac{d}{2}$.
\end{theorem}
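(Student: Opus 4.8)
The plan is to identify the embedded $\mathbb{CP}^1$ with the family of spin-$\frac{d}{2}$ coherent states already introduced, and to exhibit the round sphere of radius $\frac{d}{2}$ as the image of this family under the angular-momentum expectation map (the moment map into $\mathfrak{su}(2)^{*}\cong\mathbb{R}^3$). Concretely, to each coherent state $\ket{\theta,\varphi}$ I would associate the vector $\langle\vec{S}\rangle := (\bra{\theta,\varphi}S_x\ket{\theta,\varphi},\,\bra{\theta,\varphi}S_y\ket{\theta,\varphi},\,\bra{\theta,\varphi}S_z\ket{\theta,\varphi})$, where $S_x,S_y,S_z$ are the spin-$\frac{d}{2}$ operators generalizing the Pauli matrices of \eqref{eq:paulispin}. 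The claim then becomes that this assignment carries the coherent-state $\mathbb{CP}^1$ bijectively onto the Euclidean sphere of radius $\frac{d}{2}$ -- precisely the sphere of angular-momentum vectors of fixed magnitude $\frac{d}{2}$ seen classically in \autoref{fig:gyro:prec}.

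First I would compute the height coordinate $\langle S_z\rangle$. By the preceding theorem a coherent state $\ket{\theta,\varphi}$ induces the binomial distribution $P(k,d)=\binom{d}{k}\cos^{2k}(\tfrac{\theta}{2})\sin^{2(d-k)}(\tfrac{\theta}{2})$ over the eigenstates, and the $k$-th eigenstate carries the eigenvalue $k-\frac{d}{2}$. Hence $\langle S_z\rangle=\sum_{k=0}^{d}(k-\tfrac{d}{2})P(k,d)$ is, up to the shift $-\frac{d}{2}$, the mean of a binomial variable with $d$ trials and success probability $p=\cos^2(\tfrac{\theta}{2})$. Using $\mathbb{E}[k]=dp=d\cos^2(\tfrac{\theta}{2})$ together with the half-angle identity $2\cos^2(\tfrac{\theta}{2})-1=\cos\theta$ immediately gives $\langle S_z\rangle=\frac{d}{2}\cos\theta$, which is exactly the height of a point at colatitude $\theta$ on a sphere of radius $\frac{d}{2}$.

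Next I would pin down the two remaining components, for which the cleanest route is equivariance. The coherent states form a single orbit of the irreducible unitary spin-$\frac{d}{2}$ representation of $SU(2)$; the map $\ket{\psi}\mapsto\langle\vec{S}\rangle$ intertwines this action with the rotation action on $\mathbb{R}^3$; and $\ket{\theta,\varphi}$ is obtained from the north-pole state $\ket{\uparrow\cdots\uparrow}$ by the rotation carrying $\hat{z}$ to the direction $(\theta,\varphi)$. Since $\langle\vec{S}\rangle=(0,0,\tfrac{d}{2})$ at the north pole, covariance forces $\langle\vec{S}\rangle=\frac{d}{2}(\sin\theta\cos\varphi,\sin\theta\sin\varphi,\cos\theta)$ in general, so the image has constant norm $\frac{d}{2}$ and sweeps out the whole sphere as $(\theta,\varphi)$ range over $S^2$; the map is injective because distinct coherent states point in distinct directions. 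The same equivariance shows that the metric induced from the ambient Fubini--Study structure is $SU(2)$-invariant, hence round, so the embedded $\mathbb{CP}^1$ is genuinely a round sphere.

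The step I expect to be the main obstacle is conceptual rather than computational: isolating the sense in which the radius equals $\frac{d}{2}$. The intrinsic Fubini--Study metric pulled back along \eqref{eq:veronese} is already round -- its K\"{a}hler potential is $\log(1+|w|^2)^{d}=d\log(1+|w|^2)$, a constant multiple of the $\mathbb{CP}^1$ potential -- but its radius scales like $\sqrt{d}$ rather than $d$, whereas the value $\frac{d}{2}$ is the radius of the angular-momentum realization in $\mathbb{R}^3$, matching the classical precession picture. Making this identification precise, and in particular fixing the $SU(2)$-invariant (apolar) Hermitian inner product on degree-$d$ forms so that the plain monomial embedding \eqref{eq:veronese} becomes the isometric coherent-state embedding, is where the care is required; once the correct structure is chosen, the computation above is routine.
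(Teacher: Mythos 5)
The paper offers no proof of this theorem---it defers entirely to the cited reference \cite[Ch.~6.3]{bengtsson2006geometry}---so your argument is necessarily a different route, and it is a sound one. The reference's computation is intrinsic: pull back the Fubini--Study K\"{a}hler potential along the coherent-state embedding, obtain $d\log(1+|w|^{2})$, and conclude that the induced metric is a constant multiple of the round one. You instead realize the sphere extrinsically, as the image of the coherent-state orbit under the expectation map $\ket{\psi}\mapsto\langle\vec{S}\rangle$ into $\mathfrak{su}(2)^{*}\cong\mathbb{R}^{3}$; the binomial-mean computation $\langle S_{z}\rangle=\frac{d}{2}\cos\theta$, the $SU(2)$-equivariance argument for the remaining components, and injectivity via the Majorana picture are all correct. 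What your route buys is precisely the number $\frac{d}{2}$: as you rightly observe, the intrinsic Fubini--Study radius of the embedded curve is $\frac{\sqrt{d}}{2}$, not $\frac{d}{2}$, so the theorem is literally true only if ``radius'' is read in the moment-map/angular-momentum sense --- the reading consistent with the paper's later claims that the moment map image is $[-\frac{d}{2},\frac{d}{2}]$ and that the associated gyroscope has $|\mathbf{L}|=\frac{d}{2}$. Identifying and resolving that ambiguity is a contribution, not a defect. One point should be made fully explicit rather than left as a remark: the embedding \eqref{eq:veronese} carries coefficients $\binom{d}{i}$, while the coherent-state expansion in an orthonormal eigenbasis carries $\sqrt{\binom{d}{i}}$; the two agree only after equipping the degree-$d$ forms with the $SU(2)$-invariant (apolar/Bombieri) Hermitian product, and since both the roundness of the induced metric and the unitarity of the $SU(2)$ action depend on this choice of ambient inner product, the proof should fix it at the outset rather than at the end.
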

\begin{proof}
	See \cite[Ch. 6.3]{bengtsson2006geometry}.
\end{proof}
What allows for a physical interpretation is that we can measure areas along the surface of the complex normal curve. Understanding why this is the case requires some basic notions from the mathematical theory of classical mechanics, which we describe in the next section.

\section{From Complex Normal Curves to Classical Mechanics}\label{sec:classicalbasics}
\subsection{Basics of Hamiltonian Mechanics}
We present only a high-level and non-technical overview of some concepts from theoretical mechanics. For further details, the reader is referred to the textbooks -- as an introduction we recommend \cite{susskind2013theoretical} and \cite{hand1998analytical}. Recall that Newton's 2nd law $\mathbf{F} = m\mathbf{a}$ prescribes accelerations -- the second time derivatives of coordinates. Given initial conditions (positions and first time derivatives), the state of a mechanical system can be computed by solving a second-order ODE. This observation implies that a mechanical system described by $d$ generalized coordinates (positions, angles, etc.) can be interpreted as a dynamical system on the $2d$-dimensional space of coordinates and their time derivatives. This viewpoint is taken in the Hamiltonian formulation of mechanics, where the mechanical system is represented by a \emph{phase space}, spanned by the coordinates $\mathbf{q} = (q_{1}, q_{2}, \ldots, q_{d})$ and corresponding (conjugate) momenta $\mathbf{p} = (p_{1}, p_{2}, \ldots, p_{d})$. As a consequence of \emph{conservation of energy}, there exists a scalar function $H(\mathbf{q},\mathbf{p})$ on the phase space, known as the total energy, or the \emph{Hamiltonian}, that remains constant in time:
\begin{align}
\frac{\partial H(\mathbf{q}(t),\mathbf{p}(t))}{\partial t} = 0
\end{align}

\begin{figure}[h]
	\centering
	\includegraphics[width=0.3\textwidth, keepaspectratio]{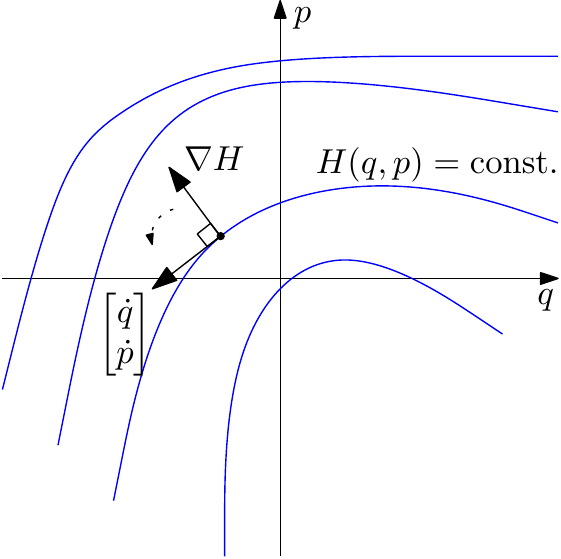}
	\caption{Hamiltonian mechanics in a two-dimensional phase space}
	\label{fig:hamiltonian}
\end{figure}

In other words, a classical mechanical system evolves in phase space along the level sets of its energy function. For a system described by a single generalized coordinate, i.e. a 2-dimensional phase space $\mathbb{R}^{2} = \left\lbrace(q,p) \right\rbrace$, with energy function $H(q,p)$, Newton's 2nd law implies that the time derivative is the gradient vector of the energy function, rotated by 90 degrees, as seen on \autoref{fig:hamiltonian}:
\begin{align}
\begin{bmatrix}
\dot{q} \\
\dot{p}
\end{bmatrix}
=
\begin{bmatrix}
0 & -1 \\
1 & 0
\end{bmatrix} \nabla H(q,p) = 
\begin{bmatrix}
-\frac{\partial H(q,p)}{\partial p} \\
\frac{\partial H(q,p)}{\partial q}
\end{bmatrix} 
\end{align}
These equations are \emph{Hamilton's equations} of motion.

The rotation matrix $\begin{bmatrix}
0 & -1 \\
1 & 0
\end{bmatrix}$ can also be interpreted as a bilinear form, in which case this matrix acts on pairs of phase space tangent vectors as follows:
\begin{align}
\omega(\mathbf{v},\mathbf{w}) = \begin{bmatrix}
q_{v} & p_{v}
\end{bmatrix}\begin{bmatrix}
0 & -1 \\
1 & 0
\end{bmatrix}\begin{bmatrix}
q_{w} \\
p_{w}
\end{bmatrix} = q_{w}p_{v} - q_{v}p_{w}
\end{align}
This expression is the signed area of the parallelogram spanned by the two vectors. This area form can be used to convert scalar functions (such as energy or angular momentum) on the phase space to vector fields (dynamical systems) tangent to the level sets. More importantly for what follows, we can go the other way and convert a vector field to a scalar function that stays constant along its integral curves.

Generalizing Hamiltonian mechanics from Euclidean phase spaces to surfaces such as the sphere, requires the machinery of \emph{symplectic geometry}, which studies manifolds with a symplectic structure (a generalization of the area form $\omega$). In this paper, we cannot describe this theory in detail  -- see e.g. \cite{mcduff2009what,arnold1990symplectic} for introductions.

\subsection{B\'{e}zier Curves as Classical Mechanical Systems}\label{sec:bezierclassical}
We have seen in \autoref{sec:bezier2physics:Cnormal} that a complex normal curve is a sphere, on which unit complex numbers induce a rotational motion. Infinitesimal rotations then define flows that preserve the area form. This observation suggests that the sphere is in fact a phase space of some mechanical system.
\begin{theorem}
	A complex normal curve of degree-$d$ is a Hamiltonian phase space, with the area form of a sphere with radius $\frac{d}{2}$. The energy function corresponding to rotations by unit complex numbers is proportional to the height function of the sphere.
\end{theorem}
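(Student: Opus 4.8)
The plan is to treat the two assertions separately, because they have very different difficulty. The claim that the curve is a Hamiltonian phase space is essentially free: a complex normal curve is a real two-dimensional oriented manifold, and on any oriented surface an area form $\omega$ is automatically symplectic, since $d\omega$ is a three-form on a two-manifold and hence vanishes identically. By \autoref{thm:round} the curve is isometric to a round sphere of radius $\frac{d}{2}$, so I would simply take $\omega$ to be the area form of that sphere; this settles the first sentence. The real work is the second sentence—identifying the Hamiltonian of the circle action with the height function.

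For the circle action I would recall from \autoref{sec:bezier2physics:Cnormal} that multiplication by $e^{i\varphi} \in \mathbb{S}^{1}$ acts on $\mathbb{CP}^{1}$ by $z \mapsto e^{i\varphi}z$, fixing the two poles $z=0$ and $z=\infty$ and rotating the sphere about its North--South axis. Under the Fubini--Study metric this map lies in $PU(2) \cong SO(3)$ and is therefore an isometry, so it preserves $\omega$. Consequently the generating vector field $X$ of the action is symplectic, i.e. $\iota_{X}\omega$ is closed; and since $\mathbb{S}^{2}$ is simply connected, $H^{1}(\mathbb{S}^{2}) = 0$, every closed one-form is exact, so there is a global function $H$ with $\iota_{X}\omega = dH$. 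The action is thus genuinely Hamiltonian.

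To compute $H$ I would pass to Archimedes' cylindrical coordinates $(z,\varphi)$ on the sphere of radius $R = \frac{d}{2}$, where $z \in [-R,R]$ is the height along the axis and $\varphi$ is the azimuth. The content of Archimedes' theorem is that radial projection onto the circumscribed cylinder is area-preserving, so in these coordinates
\begin{align}
\omega = R\, d\varphi \wedge dz .
\end{align}
A short computation with the Veronese map \eqref{eq:veronese} shows that $X = \partial_{\varphi}$ is the single-speed azimuthal rotation, whence $\iota_{X}\omega = R\, dz = d(Rz)$. Therefore $H = Rz = \frac{d}{2}\,z$ up to an additive constant, which is exactly the height function scaled by the radius, as claimed.

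The main obstacle is not any single computation but the normalization bookkeeping that carries the factor of $d$. One must check that the abstract action $z\mapsto e^{i\varphi}z$ on $\mathbb{CP}^{1}$ lifts to the torus action with weights $(0,1,\ldots,d)$ on $\mathbb{CP}^{d}$ and restricts to the single-speed rotation above—which the Veronese map \eqref{eq:veronese} confirms—and that this is compatible with the radius $\frac{d}{2}$ supplied by \autoref{thm:round}. The exact proportionality constant between $H$ and the height, and whether the moment-map image is literally the B\'ezier domain $[0,d]$ or an affine rescaling of it, depends on how one normalizes the symplectic form relative to the $2\pi$-periodic action; reconciling the metric normalization (radius $\frac{d}{2}$) with the moment-polytope normalization ($[0,d]$) is where the care is needed. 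By contrast, the closedness of $\omega$ and the mere existence of $H$ are immediate.
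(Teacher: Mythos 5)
Your proposal is correct and is essentially the argument that the paper delegates to its citation of da Silva (Ch.~1.2): the area form of an oriented surface is automatically symplectic, the induced form is that of the radius-$\frac{d}{2}$ sphere by \autoref{thm:round}, and the Hamiltonian of the azimuthal rotation is computed to be the height function via Archimedes' cylindrical coordinates. Your closing remark on reconciling the metric normalization with the weight-$(0,1,\ldots,d)$ lift of the action is exactly the right point of care, and consistent with the paper's hedge that the moment-map image is only ``some multiple of'' $[-\frac{d}{2},\frac{d}{2}]$.
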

\begin{proof}
	See \cite[Ch. 1.2]{da2003symplectic}
\end{proof}
The mechanical system thus associated to the degree-$d$ B\'{e}zier curve is a gyroscope with angular momentum $\mathbf{L}$ of fixed magnitude $\left| \mathbf{L} \right| = \frac{d}{2}$, precessing under  an external torque $\mathbf{T}$ \cite[Sec. 3]{stone1989supersymmetry}. The level sets of the energy $H = \mathbf{L} \cdot \mathbf{T}$ are the sphere latitudes. The assignment of phase space points to the values of quantities unchanged by a Hamiltonian flow is called the \emph{moment map} \cite{guillemin1994moment}. In our case the moment map is equivalent to the energy function, and its image is (some multiple of) the interval $[-\frac{d}{2}, \frac{d}{2}]$ \cite[Ch. 1.6]{da2003symplectic}. Up to translation, this is the domain and also the Newton polytope for a degree-$d$ B\'{e}zier curve. We now extend this connection between B\'ezier curves and classical mechanics further, to quantum mechanics.

\section{From Classical to Quantum Mechanics}\label{sec:bezierquantum}
Quantum mechanics is known to be radically different from classical mechanics. It is nevertheless common practice in theoretical physics to construct quantum systems from classical ones. This extension means taking a classical system, described by e.g. its phase space and assigning to it a vector space, along with operators representing the observable quantities. This process is known as \emph{quantization}\footnote{Not directly related to quantization in signal processing.}. We again resort to a high-level and informal overview.

\subsection{Geometric Quantization}
\begingroup
\setlength{\columnsep}{6pt}
\setlength{\intextsep}{0pt}
\begin{wrapfigure}{r}{0.48\textwidth}
	\includegraphics[width=0.48\textwidth, keepaspectratio]{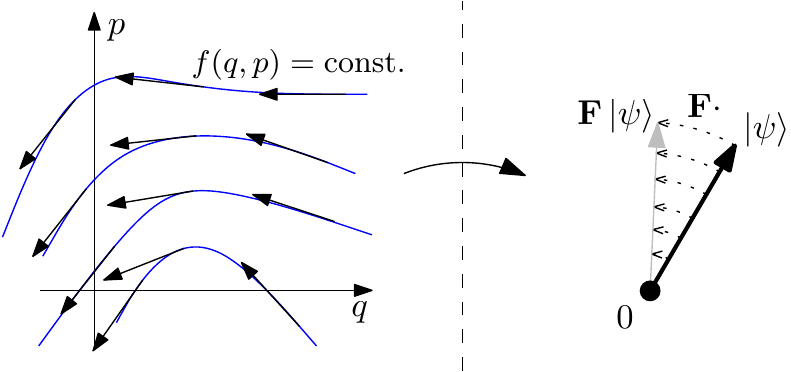}
	\label{fig:cylinder}
\end{wrapfigure}Let us assume we have a classical phase space, with generalized coordinates  $\mathbf{q} = (q_{1},\ldots,q_{n})$ and momenta $\mathbf{p} = (p_{1}, \ldots, p_{n})$. A classical observable is a scalar function $f(\mathbf{q},\mathbf{p})$, which corresponds to a flow along its level sets. Such a flow gives an infinitesimal transformation of phase space preserving the form of Hamilton's equations, so classical observables correspond to \emph{canonical transformations} \cite{hand1998analytical}. A change of coordinates must also have an effect on the corresponding quantum state. This transformation of the quantum state vector must preserve the sum of squared magnitudes of its components (i.e. probabilities) -- so it must be a unitary transformation. \emph{Infinitesimal} unitary matrices can be identified with \emph{Hermitian} matrices, which describe observables in quantum mechanics. In effect we have the correspondences
\begin{align*}
\text{classical observables} \longleftrightarrow \text{canonical transformations}
\longrightarrow \text{Hermitian matrices} \longleftrightarrow \text{quantum observables}
\end{align*} 

\endgroup

Observables also have a Lie-algebraic structure, which must be preserved by quantization \cite{nlab2016quantization}. This structure, along with other physical considerations, form a set of requirements every quantization procedure must adhere to \cite{todorov2012quantization}. While these requirements are impossible to satisfy in the general case (as expected, given the differences between classical and quantum physics), for many systems -- including gyroscopes -- there exist a well-defined procedure called \textit{geometric quantization}. See e.g. \cite[Ch. 22-23]{hall2013quantum} and \cite{blau1992symplectic} for details on this mathematically involved topic.

Geometric quantization rests on the observation that a tangent vector field on a manifold (e.g. a surface) represents a directional derivative, which is also a linear operator acting on the space of scalar functions. It is natural then to take as the quantum states  complex-valued functions over the phase space, on which the vector fields associated to scalar functions (classical observables) act by differentiation. Ensuring that this construction is both mathematically well-defined and stays compatible with the physics involves some advanced differential geometry and topology that space limitations prevent us from describing \footnote{In short, one needs a complex line bundle, with a curvature form compatible with the symplectic structure of the phase space. Certain special sections of this bundle, which form a finite-dimensional vector space, will serve as the quantum states.}. Generally, there are severe constraints on which phase spaces can be successfully quantized.  

\subsection{B\'{e}zier Curves as Quantum Mechanical Systems}
In the case most relevant for us, the phase space is a sphere, with an area form induced by its embedding into complex projective space. However, not all spherical phase spaces can be geometrically quantized.  
\begin{theorem}
	A sphere can be geometrically quantized as a phase space if and only if the sphere's radius is a half-integer $R = \frac{d}{2},\ d \in \mathbb{Z}.$
\end{theorem}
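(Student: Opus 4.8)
The plan is to reduce quantizability of the sphere to a single cohomological \emph{integrality condition} and then to evaluate that condition explicitly. Geometric quantization of a symplectic manifold $(M,\omega)$ begins with the choice of a \emph{prequantum line bundle}: a Hermitian complex line bundle $L \to M$ carrying a connection whose curvature is $\omega$ (up to the conventional factor of $2\pi i$, with $\hbar = 1$). By Weil's theorem, such a bundle exists precisely when the class $[\omega]/2\pi$ lies in the image of $H^2(M;\mathbb{Z}) \to H^2(M;\mathbb{R})$ --- that is, when $\omega/2\pi$ is an integral cohomology class. So the first step is to invoke this criterion and turn the existence question into a question about the periods of $\omega$.

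For the sphere this is especially clean, since $H^2(S^2;\mathbb{Z}) \cong \mathbb{Z}$ is generated by a single class. Integrality of $[\omega]/2\pi$ therefore collapses to the requirement that the one number
\begin{align}
\frac{1}{2\pi}\int_{S^2}\omega
\end{align}
be an integer, which will then be the degree (first Chern number) of the candidate bundle $L$. Thus everything comes down to computing the total symplectic area of the radius-$R$ sphere.

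This computation is the crux, and pinning down the normalization is where the real care is needed. The temptation is to use the metric area $4\pi R^2$ of a round sphere, but the symplectic area scales only \emph{linearly} in $R$. To see this I would use the result of \autoref{sec:bezierclassical}, which identifies the height function $h$ as the moment map; $h$ is canonically conjugate to the azimuthal angle $\phi$, so $(h,\phi)$ are Darboux coordinates in which $\omega = dh \wedge d\phi$, with $h$ ranging over the moment-map image $[-R,R]$ and $\phi$ over $[0,2\pi)$. Integrating gives $\int_{S^2}\omega = 2R \cdot 2\pi = 4\pi R$, whence
\begin{align}
\frac{1}{2\pi}\int_{S^2}\omega = 2R.
\end{align}
The integrality condition is therefore $2R \in \mathbb{Z}$, i.e. $R = \tfrac{d}{2}$ with $d \in \mathbb{Z}$, which proves one direction.

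For the converse, when $R = d/2$ the integral equals $d$, and the prequantum bundle is realized concretely as the degree-$d$ bundle $\mathcal{O}(d)$ over $\mathbb{CP}^1$. To finish the quantization I would impose the natural holomorphic polarization and take the Hilbert space to be the holomorphic sections $H^0(\mathbb{CP}^1,\mathcal{O}(d))$ --- the degree-$d$ homogeneous polynomials in two variables --- which has dimension $d+1$. This recovers precisely the spin-$\frac{d}{2}$ system of \autoref{sec:rotational:quantum}, closing the equivalence. The main obstacle throughout is the bookkeeping of the $2\pi$ factor in Weil's condition together with the linear-in-$R$ behaviour of the symplectic area; once the Darboux description $\omega = dh \wedge d\phi$ is in hand, the half-integrality drops out immediately.
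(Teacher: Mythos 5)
Your argument is correct, and it is worth noting that the paper itself offers no proof of this statement at all --- its ``proof'' is a bare citation to the geometric-quantization literature. What you have written is essentially the standard argument that the cited reference contains: reduce existence of the prequantum line bundle to Weil's integrality criterion $[\omega/2\pi] \in H^2(S^2;\mathbb{Z})$, observe that on $S^2$ this is the single condition $\frac{1}{2\pi}\int_{S^2}\omega \in \mathbb{Z}$, compute the period, and exhibit $\mathcal{O}(d)$ over $\mathbb{CP}^1$ with its $(d+1)$-dimensional space of holomorphic sections for the converse. You have also correctly identified the one place where the proof can silently go wrong: the normalization of $\omega$. The theorem is only true for the coadjoint-orbit (Kirillov--Kostant--Souriau) form, for which the $2\pi$-periodic rotation has moment map equal to the height $h$ with image exactly $[-R,R]$, so that $\omega = dh\wedge d\phi$ and the symplectic area is $4\pi R$; had one used the induced metric area form, the period would be $4\pi R^2$ and the integrality condition would read $2R^2\in\mathbb{Z}$ instead. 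Since the paper's own statement in \autoref{sec:bezierclassical} hedges (``proportional to the height function,'' ``some multiple of the interval''), you are supplying the precise normalization rather than merely quoting it --- this is the right choice, and it is the one forced by consistency with the subsequent theorem identifying the quantization as a spin-$\frac{d}{2}$ system. The only ingredient you omit is the metaplectic (half-form) correction, which shifts the bundle to $\mathcal{O}(d-1)$ and the dimension to $d$; the paper relegates this to a footnote and absorbs it into a shift of the radius, so your uncorrected count is consistent with the statement being proved.
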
 
\begin{proof}
	See \cite{nlab2016geometric}.
\end{proof}
Together with \autoref{thm:round}, this theorem implies that the complex normal curves associated to B\'{e}zier curves are exactly those spherical phase spaces that can be geometrically quantized.
\begin{theorem}
	The geometric quantization of a sphere with radius $\frac{d}{2}$ gives a $(d+1)$-dimensional\footnote{A minor complication arises in the metaplectic correction phase of quantization. This correction decreases the state space dimension, which can be compensated by increasing the sphere radius accordingly \cite{nlab2016geometric}.} Hilbert space, that describes a spin-$\frac{d}{2}$ quantum system. 
\end{theorem}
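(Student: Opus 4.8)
The plan is to carry out the geometric quantization explicitly using a \emph{Kähler polarization}, for which the sphere $S^2 \cong \mathbb{CP}^1$ is ideally suited. First I would realize the radius-$\frac{d}{2}$ sphere as $\mathbb{CP}^1$ equipped with the symplectic form equal to $d$ times the positive generator of $H^2(\mathbb{CP}^1,\mathbb{Z})$; by \autoref{thm:round} this is precisely the form induced by the degree-$d$ Bernstein embedding \eqref{eq:veronese}. The prequantum line bundle is then the holomorphic line bundle $\mathcal{O}(d)$ whose curvature reproduces the symplectic form, its first Chern class being the integer $d$ — which is exactly the integrality condition appearing in the preceding quantizability theorem, so the half-integer radius $\frac{d}{2}$ corresponds to the integer degree $d$.

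Next I would fix the polarization. Since $\mathbb{CP}^1$ is Kähler, the natural choice is the holomorphic polarization, so the quantum states are the \emph{holomorphic} sections of $\mathcal{O}(d)$. A standard identification realizes $H^0(\mathbb{CP}^1,\mathcal{O}(d))$ as the space of degree-$d$ homogeneous polynomials in the homogeneous coordinates $[z:w]$ — precisely the $(d+1)$-dimensional span of the monomials $z^i w^{d-i}$ that appeared in the normal-curve map \eqref{eq:veronese}. This yields the claimed dimension $d+1$ directly and ties the Hilbert space back to the Bernstein basis.

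To show the representation is spin-$\frac{d}{2}$, I would observe that $SU(2)$ acts on $\mathbb{CP}^1$ by rotations of the sphere, and this action lifts canonically to $\mathcal{O}(d)$, hence to a linear action on its holomorphic sections. By the Borel--Weil theorem, this representation on $H^0(\mathbb{CP}^1,\mathcal{O}(d))$ — the degree-$d$ binary forms — is the irreducible $SU(2)$ representation of dimension $d+1$, which is exactly spin-$\frac{d}{2}$ since $2\cdot\frac{d}{2}+1 = d+1$. Under this identification the height function (moment map) of the preceding section becomes the operator measuring the $z$-component of spin, whose eigenvalues $-\frac{d}{2},\ldots,+\frac{d}{2}$ recover the measured angular-momentum values, and the monomials diagonalize it as the eigenstates.

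The main obstacle is the \emph{metaplectic correction} flagged in the footnote: strictly, the quantum states are sections of $\mathcal{O}(d)\otimes K^{1/2}$, where $K = \mathcal{O}(-2)$ is the canonical bundle of $\mathbb{CP}^1$, so $K^{1/2}=\mathcal{O}(-1)$ shifts the effective degree down by one and would give only $d$ states. Reconciling this with the desired dimension $d+1$ requires the radius adjustment mentioned in the footnote — quantizing $\mathcal{O}(d+1)$ so that the corrected bundle is again $\mathcal{O}(d)$ — together with a check that the half-form bundle admits a consistent square root carrying an equivariant lift. Making the curvature--symplectic-form normalization and this correction fully rigorous, rather than citing \cite{nlab2016geometric}, is where the real work lies.
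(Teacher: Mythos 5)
Your proposal is correct and follows essentially the same route as the paper, whose proof is simply a citation to \cite[Ch.~4.2]{blau1992symplectic}: that reference likewise performs the K\"ahler quantization of $\mathbb{CP}^1$ with prequantum bundle $\mathcal{O}(d)$, identifies the Hilbert space with the $(d+1)$-dimensional space of degree-$d$ binary forms, and recognizes the $SU(2)$ action as the spin-$\frac{d}{2}$ irreducible representation (Borel--Weil), with the metaplectic shift handled exactly as in the paper's footnote. You have merely written out the argument the paper delegates to its reference, and your flagged normalization issue (radius $\frac{d}{2}$ corresponding to first Chern class $d$) is consistent with the moment-map image $[-\frac{d}{2},\frac{d}{2}]$ used in \autoref{sec:bezierclassical}.
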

\begin{proof}
	See \cite[Ch. 4.2]{blau1992symplectic}.
\end{proof}
This theorem gives an explanation for the observed coincidences between B\'{e}zier curves and spin systems. 

\section{B\'{e}zier Curves and Harmonic Oscillators}\label{sec:bezier_osc}
Quantum spin systems can be described in terms of another quantum mechanical system: the quantum harmonic oscillator. This alternative description of B\'{e}zier theory will be key for further generalizations.

\subsection{The Quantum Harmonic Oscillator}
The \emph{simple harmonic oscillator} models movement in a single direction about an equilibrium position, under a linear restoring force. An oscillator's state at any given time is described by the \emph{position} along a line $x \in \mathbb{R}$, and the linear \emph{momentum} $p \in \mathbb{R}$, and its total energy\footnote{The oscillator frequency is irrelevant for our purposes, so we have chosen the mass and the spring constant to be 1.} is $H := \frac{1}{2}(p^{2} + x^{2})$. Viewed in the $x-p$ plane, the oscillator's state stays on some circular level set $H(x,p) = E$, due to conservation of energy.

The quantum mechanical equivalent is the \emph{quantum harmonic oscillator} (QHO) \cite{townsend2000modern}. The quantum state of a QHO is described by a probability amplitude (wavefunction) over the possible positions. The observables $x$ and $p$ correspond to linear operators acting on functions over the real line:
\begin{align}
\mathbf{x} := x \cdot;\ \mathbf{p} := -i\frac{d}{dx}.
\end{align}
Operators generally do not commute -- position and momentum in particular satisfy the \emph{commutation relation}
\begin{align}
[\mathbf{x}, \mathbf{p}] := \mathbf{x}\mathbf{p} - \mathbf{p}\mathbf{x} = i.
\end{align}
A formal substitution of $\mathbf{x}$ and $\mathbf{p}$ into the expression of the energy results in the energy operator $\mathbf{H}$, which has a countably infinite set of eigenvectors (\emph{energy eigenstates}) $\mathbf{H}\ket{E_{n}} = E_{n}\ket{E_{n}}$ with eigenvalues $E_{n} =  n + \frac{1}{2},\ n = 0, 1, 2, \ldots.$

The energy of a QHO is thus quantized in the traditional sense and the energy eigenstates $\ket{n} := \ket{E_{n}}$ are characterized by the number of \emph{energy quanta} they contain. We define \emph{annihilation and creation operators}
\begin{align}\label{eq:creat_def}
\mathbf{a} := \frac{1}{\sqrt{2}}\left( \mathbf{x} + i \mathbf{p} \right);\ \mathbf{a}^{+} :=  \frac{1}{\sqrt{2}}\left( \mathbf{x} - i \mathbf{p} \right).
\end{align} 
These operators act on the energy eigenstates by removing or adding a single quantum of energy: 
\begin{align}\label{eq:creat_effect}
\mathbf{a}\ket{n} = \sqrt{n}\ket{n-1};\ \mathbf{a}^{+}\ket{n} = \sqrt{n+1}\ket{n+1}, 
\end{align}
and satisfy the commutation relations $[\mathbf{a},\mathbf{a}^{+}] = 1$.
We also define the \emph{number operator}, $\mathbf{N} := \mathbf{a}^{+}\mathbf{a}$, for which $\mathbf{N}\ket{n} = n\ket{n}$.
\subsection{Spin Systems as Pairs of Oscillators}
We now consider a pair of independent, identical harmonic oscillators. There are two sets of annihilation/creation operators $ \mathbf{a}_{i}, \mathbf{a}^{+}_{i}\  (i = 1,2)$ satisfying the relations $[\mathbf{a}_{i},\mathbf{a}_{j}] = [\mathbf{a}^{+}_{i}, \mathbf{a}^{+}_{j}] = [\mathbf{a}_{i}, \mathbf{a}^{+}_{j}] = 0\  (i \neq j).$
The following observation is due to J. Schwinger \cite{schwinger1952angular}:

\begin{theorem}
	A pair of identical harmonic oscillators containing exactly $d$ quanta of energy is quantum mechanically equivalent to a spin-$\frac{d}{2}$ system.
\end{theorem}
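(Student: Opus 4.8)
The plan is to realize the angular-momentum algebra directly on the Hilbert space of the oscillator pair, following Schwinger, and then to identify each fixed-energy sector with an irreducible spin representation. The combined Hilbert space is spanned by the product eigenstates $\ket{n_1, n_2}$ with $n_1, n_2 \geq 0$, and the total number operator $\mathbf{N} := \mathbf{N}_1 + \mathbf{N}_2$, where $\mathbf{N}_i := \mathbf{a}_i^+\mathbf{a}_i$, counts the total quanta. The sector $\mathcal{H}_d$ on which $\mathbf{N}$ has eigenvalue $d$ is spanned by $\ket{d,0}, \ket{d-1,1}, \ldots, \ket{0,d}$ and is therefore exactly $(d+1)$-dimensional, already matching the dimension of a spin-$\frac{d}{2}$ system.

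First I would introduce the bilinear operators
\begin{align}
\mathbf{J}_x &:= \tfrac{1}{2}\left(\mathbf{a}_1^+ \mathbf{a}_2 + \mathbf{a}_2^+ \mathbf{a}_1\right), &
\mathbf{J}_y &:= \tfrac{1}{2i}\left(\mathbf{a}_1^+ \mathbf{a}_2 - \mathbf{a}_2^+ \mathbf{a}_1\right), &
\mathbf{J}_z &:= \tfrac{1}{2}\left(\mathbf{N}_1 - \mathbf{N}_2\right).
\end{align}
Each transfers a quantum from one oscillator to the other and so commutes with $\mathbf{N}$, acting within each sector $\mathcal{H}_d$. Using only $[\mathbf{a}_i, \mathbf{a}_j^+] = \delta_{ij}$ together with the vanishing cross-commutators, I would verify the angular-momentum relations $[\mathbf{J}_x, \mathbf{J}_y] = i\mathbf{J}_z$ and its cyclic permutations; equivalently, for the ladder operators $\mathbf{J}_+ := \mathbf{a}_1^+ \mathbf{a}_2$ and $\mathbf{J}_- := \mathbf{a}_2^+ \mathbf{a}_1$, one checks $[\mathbf{J}_z, \mathbf{J}_\pm] = \pm \mathbf{J}_\pm$ and $[\mathbf{J}_+, \mathbf{J}_-] = 2\mathbf{J}_z$.

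Next I would compute the Casimir operator $\mathbf{J}^2 := \mathbf{J}_x^2 + \mathbf{J}_y^2 + \mathbf{J}_z^2$ and, reordering with the commutation relations, reduce it to
\begin{align}
\mathbf{J}^2 = \frac{\mathbf{N}}{2}\left(\frac{\mathbf{N}}{2} + 1\right).
\end{align}
On $\mathcal{H}_d$ this is the constant $\frac{d}{2}(\frac{d}{2}+1)$, precisely the Casimir eigenvalue $j(j+1)$ for total spin $j = \frac{d}{2}$. Since $\mathbf{J}_z\ket{n_1, n_2} = \tfrac{1}{2}(n_1 - n_2)\ket{n_1, n_2}$, the eigenvalues of $\mathbf{J}_z$ on $\mathcal{H}_d$ run over $-\frac{d}{2}, -\frac{d}{2}+1, \ldots, +\frac{d}{2}$, exactly the spin projections of a spin-$\frac{d}{2}$ system.

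Finally, to confirm that $\mathcal{H}_d$ carries the \emph{irreducible} spin-$\frac{d}{2}$ representation rather than a reducible sum, I would exhibit $\ket{d,0}$ as a highest-weight vector: it has the maximal eigenvalue $\mathbf{J}_z = \frac{d}{2}$ and is annihilated by $\mathbf{J}_+$. Repeated application of $\mathbf{J}_-$ then descends through the whole basis of $\mathcal{H}_d$ with the normalizations inherited from \eqref{eq:creat_effect}, reproducing the standard angular-momentum ladder and yielding the identification $\ket{k, d-k} \leftrightarrow \ket{\psi_z^{-\frac{d}{2}+k}}$; here oscillator $1$ plays the role of a spin-up reservoir and oscillator $2$ of a spin-down reservoir, recovering the Majorana picture of $d$ qubits. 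The main obstacle is purely computational — carrying out the bracket algebra for the commutators and the Casimir carefully — while the one conceptual point needing attention, irreducibility, is settled cleanly by the highest-weight construction.
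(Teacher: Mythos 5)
Your proof takes essentially the same approach as the paper's: it is Schwinger's oscillator construction, with the identical operators $\mathbf{J}_+ = \mathbf{a}_1^+\mathbf{a}_2$, $\mathbf{J}_- = \mathbf{a}_2^+\mathbf{a}_1$, $\mathbf{J}_z = \frac{1}{2}(\mathbf{N}_1-\mathbf{N}_2)$, the same $(d+1)$-dimensional count of the fixed-quanta sector, and the same verification of the $\mathfrak{su}(2)$ commutation relations against the spin observables. Your two additions --- the Casimir evaluating to $\frac{d}{2}(\frac{d}{2}+1)$ and the highest-weight argument descending from $\ket{d,0}$ --- go slightly beyond the paper's proof (which only matches dimension, $\mathbf{J}_z$ eigenvalues, and brackets), and they usefully pin down irreducibility, a point the paper leaves implicit.
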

\begin{proof}
	Proving that two quantum mechanical systems are equivalent requires that their Hilbert spaces are of the same dimension, and there must exist a corresponding set of operators with the same eigenvalues and commutation relations.
	Given that $d$ quanta of energy can be distributed among two oscillators in $d+1$ different ways, the Hilbert space dimensions are equal. We define the operators $\mathbf{J}_{+} = \mathbf{a}_{1}^{+}\mathbf{a}_{2},\ \mathbf{J}_{-} = \mathbf{a}_{2}^{+}\mathbf{a}_{1}$, which redistribute a single quantum of energy between the oscillators, and $\mathbf{J}_{z} = \frac{1}{2}(\mathbf{N}_{1} - \mathbf{N}_{2}) =  \frac{1}{2}(\mathbf{a}_{1}^{+}\mathbf{a}_{1} - \mathbf{a}_{2}^{+}\mathbf{a}_{2})$. which gives the energy difference between the oscillators, and has the same eigenvalues $-\frac{d}{2}, -\frac{d}{2}+1, \ldots, +\frac{d}{2}$, as a spin system. These operators satisfy the commutation relations:
	\begin{align}\label{eq:cr_schwinger}
	[\mathbf{J}_{+},\mathbf{J}_{-}] = 2\mathbf{J}_{z};\ 
	[\mathbf{J}_{\pm},\mathbf{J}_{z} ] = \pm \mathbf{J}_{\pm}.
	\end{align} 
	The observables of a spin system are equivalent to the Pauli spin matrices \eqref{eq:paulispin}, and satisfy the following:
	\begin{align}
	[\sigma_{x}, \sigma_{y}] = 2i\sigma_{z}.
	\end{align}
	We can define the operators 
	\begin{align}\label{eq:pauli_creation}
	\mathbf{J}_{\pm} = \sigma_{x} \pm i \sigma_{y};\ \mathbf{J}_{z} = \sigma_{z},
	\end{align}
	acting on the states of the spin system. A quick calculation then shows that $\mathbf{J}_{\pm}$ modify the spin component along the $z$-axis by one unit, and the operators \eqref{eq:pauli_creation} satisfy commutation relations identical to \eqref{eq:cr_schwinger}.
\end{proof}

The correspondence between B\'{e}zier curves and spin systems can now be restated using harmonic oscillators. The control points are associated with the energy eigenstates of the oscillator pair, such that for the control point $k$, there are $d-k$ energy quanta in the first oscillator and $k$ quanta in the second, providing another physical interpretation of polar labels. The symmetry property of polar forms is reflected in the fact that only the number of quanta in each oscillator has physical significance. The oscillator model can be considered an alternative formulation of the \emph{P\'{o}lya's urn models} studied by Goldman \cite{goldman1985polya}: the oscillator quanta are directly analogous to the balls put in the urns.

\section{Poisson Curves  and Harmonic Oscillators}\label{sec:poisson_osc}
In this chapter, we outline the relationship between harmonic oscillators and Poisson curves, thus finding further correspondences between physics and CAGD.
\subsection{Poisson Curves and the Analytic Blossom}
Poisson curves are related to analytic functions expressed via a Taylor series, as B\'{e}zier curves are related to polynomials expressed in the power basis \cite{morin2000subdivision,morin2002analytic}. A \emph{Poisson curve} is defined as
\begin{align}
\mathbf{P}(t) = \sum_{i = 0}^{\infty}\mathbf{P}_{i}b_{i}(t),
\end{align} 
where the \emph{Poisson basis functions} 
\begin{align}\label{eq:poisson_basis}
b_{i}(t) = e^{-t}\frac{t^{i}}{i!},\ i \in \mathbb{N},
\end{align}
describe a Poisson probability distribution over the control points $\mathbf{P}_{i}$. These curves share many of their properties with B\'{e}zier curves \cite{morin2002analytic}. In particular, there exists a unique symmetric, multiaffine function $\mathbf{p}(t_{1}, t_{2}, \ldots)$ of infinitely many parameters, called the \emph{analytical blossom}, that enjoys a carefully formulated diagonal property \cite{morin2002analytic,goldman2002affine} and assigns polar labels to the control points of the Poisson curve:
\begin{align}
\mathbf{p}(\underbrace{1, \ldots, 1}_{i}, 0, 0,  \ldots) = \mathbf{P}_{i}.
\end{align}

\subsection{Coherent States for Harmonic Oscillators}
We claim that Poisson curves correspond to harmonic oscillators, mirroring the previously established relationship between B\'{e}zier curves and spin systems. The connection becomes apparent when we consider the coherent states of a harmonic oscillator. 

\begin{theorem}
	The coherent states of a quantum harmonic oscillator define a Poisson probability distribution over the energy eigenstates.
\end{theorem}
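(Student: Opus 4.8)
The plan is to mirror the strategy used for the spin coherent states in the proof of the first theorem, replacing the binomial expansion of a confluent polynomial with the series expansion of an annihilation-operator eigenstate. The natural definition to adopt is that a coherent state $\ket{\alpha}$ of the quantum harmonic oscillator is an eigenstate of the annihilation operator, $\mathbf{a}\ket{\alpha} = \alpha\ket{\alpha}$ for some $\alpha \in \mathbb{C}$; this is the oscillator analogue of a minimum-uncertainty, classical-like state, just as spin coherent states minimize the spin uncertainty. I would then expand in the energy eigenbasis, $\ket{\alpha} = \sum_{n=0}^{\infty} c_n \ket{n}$, and extract the coefficients by exploiting the eigenvalue equation.

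First I would apply $\mathbf{a}$ term by term using its action \eqref{eq:creat_effect}, namely $\mathbf{a}\ket{n} = \sqrt{n}\ket{n-1}$, and compare with $\alpha \ket{\alpha}$. Matching the coefficients of each $\ket{n}$ yields the recursion $\sqrt{n+1}\,c_{n+1} = \alpha\, c_n$, whose solution is $c_n = \frac{\alpha^n}{\sqrt{n!}}\,c_0$. The normalization condition $\braket{\alpha|\alpha} = 1$ together with $\sum_n |\alpha|^{2n}/n! = e^{|\alpha|^2}$ fixes $|c_0|^2 = e^{-|\alpha|^2}$, so that, up to an irrelevant global phase, $c_n = e^{-|\alpha|^2/2}\frac{\alpha^n}{\sqrt{n!}}$.

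The probabilities of measuring $n$ energy quanta are then $|c_n|^2 = e^{-|\alpha|^2}\frac{(|\alpha|^2)^n}{n!}$, which is exactly a Poisson distribution with rate parameter $\lambda = |\alpha|^2$. Identifying this rate with the curve parameter $t$ in the Poisson basis functions \eqref{eq:poisson_basis}, i.e. setting $|\alpha|^2 = t$, matches $|c_n|^2 = b_n(t)$ term by term and establishes the claimed correspondence between coherent states and the Poisson weights, completing the oscillator analogue of the binomial result proved earlier for spin.

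I expect the only genuine subtlety — rather than the routine recursion — to be conceptual: pinning down which states deserve the name \emph{coherent} in the infinite-dimensional oscillator setting and confirming that the annihilation-operator eigenstate definition is the correct analogue of the spin coherent states described before. Once that definition is fixed, the computation is essentially forced, since the eigenvalue equation determines the coefficients uniquely up to normalization and phase. A secondary point worth checking is convergence of the resulting series, which is immediate because $\sum_n |\alpha|^{2n}/n! = e^{|\alpha|^2} < \infty$, so $\ket{\alpha}$ is a well-defined normalizable state for every $\alpha \in \mathbb{C}$.
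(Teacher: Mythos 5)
Your proof is correct, but it takes a genuinely different route from the paper's. You define a coherent state as an eigenstate of the annihilation operator, $\mathbf{a}\ket{\alpha}=\alpha\ket{\alpha}$, and solve the resulting two-term recursion $\sqrt{n+1}\,c_{n+1}=\alpha c_n$ with normalization; the paper instead takes the coherent state to be the \emph{translated vacuum} $\ket{0}_{+z}=\psi(x-z)$, Taylor-expands the translation in $z$, rewrites $\tfrac{d}{dx}$ via $\mathbf{a},\mathbf{a}^{+}$, and applies \eqref{eq:creat_effect} repeatedly to $\ket{0}$ to reach the same expansion $e^{-z^2/2}\sum_n \frac{z^n}{\sqrt{n!}}\ket{n}$. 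Your eigenvalue argument is shorter and, once the definition is granted, essentially forced; what the paper's displacement construction buys is the explicit parallel with the spin case you were aiming to mirror — there the coherent states are obtained by moving a reference eigenstate around the sphere (all Majorana points confluent), and here they are obtained by moving the vacuum around the $x$--$p$ plane, which is the picture that later feeds into the moment-map and degeneration remarks. One point you correctly flag but do not fully close: the paper's working definition is ``state saturating the Heisenberg uncertainty relation,'' and that class is strictly larger than the annihilation-operator eigenstates (squeezed states also saturate it). The translated-vacuum construction sidesteps this, whereas your argument implicitly assumes the eigenstate characterization is the intended one; a sentence noting that the displaced vacuum $\psi(x-z)$ is precisely an eigenstate of $\mathbf{a}$ (so the two definitions pick out the same states here) would make the two routes demonstrably equivalent. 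Your final identification $|\alpha|^2=t$ against the basis \eqref{eq:poisson_basis} agrees with the paper's footnote on squared coordinates and the symplectic versus algebraic moment map.
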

\begin{proof}
	A coherent state is a quantum state which saturates the Heisenberg uncertainty relations \cite{gazeau2009coherent}. The simplest example is the vacuum state $\ket{0}$, i.e. when the oscillator contains no energy quanta, described by a Gaussian probability amplitude over the positions \cite{townsend2000modern}. In analogy with what we saw in \autoref{sec:spinfacts:coherent} for spin systems, the coherent states of an oscillator are translated versions of this state in the $x-p$ plane \cite{gazeau2009coherent}. For simplicity, we only consider translations along the position axis. Given the vacuum wavefunction $\ket{0} = \psi(x)$, the version translated by an amount $z$, i.e. $\ket{0}_{+z} := \psi(x-z)$, can be expanded as a Taylor series in $z$:
	\begin{align}\label{eq:taylor}
	\ket{0}_{+z} = \sum_{n = 0}^{\infty}\frac{d^{n}\ket{0}}{dx^{n}}\frac{z^{n}}{n!}.
	\end{align}   
	The derivative is proportional to the momentum operator $\frac{d}{dx} = i\mathbf{p}$, which in turn can be expressed using annihilation and creation operators $\mathbf{p} = \frac{i}{\sqrt{2}}(a - a^{+})$. After substitution of $\mathbf{p}$ into \eqref{eq:taylor} and repeated application of the relations \eqref{eq:creat_effect}, the coherent state can be written as a superposition of energy eigenstates:
	\begin{align}
	\ket{0}_{+z} = e^{-\frac{z^{2}}{2}}\sum_{n = 0}^{\infty}\frac{z^{n}}{\sqrt{n!}}\ket{n},
	\end{align}
	where we made use of the fact $e^{x} = \sum_{n=0}^{\infty}\frac{x^{n}}{n!}$. 
	Squaring the probability amplitudes gives a Poisson distribution over the energy eigenstates, with expectation value $z^{2}$ for the number of energy quanta\footnote{Coherent states of both spin and oscillator systems are parameterized by \emph{squared} coordinates, because we implicitly make use of the \emph{symplectic} moment map (natural for physics), instead of the \emph{algebraic} moment map (natural for CAGD) -- see  \cite{sottile2003toric}.}. 
\end{proof}

\subsection{Poisson Curves from Physics}
We have seen that the coherent states of a quantum oscillator are described by Poisson distributions. It follows then, that the energy eigenstates correspond to control points of a Poisson curve, and the curve itself corresponds to the set of coherent states. The analytical blossom arises naturally as well: each eigenstate is identified by the number of filled, and (an infinite number of) empty energy bins. 

The arguments of \autoref{sec:bezier2physics} could also be employed to arrive at the harmonic oscillator from a Poisson curve. Extending the curve to the complex numbers gives the complex plane $\mathbb{C}$ with a natural symplectic structure and Hamiltonian (energy function), which defines a harmonic oscillator \cite{hall2013quantum}.

We saw earlier that two oscillators containing a fixed number of energy quanta describe a spin system. In terms of coherent states, a pair of Poisson distributed random variables with their sum being fixed results in a binomial distribution, as it is well-known from probability theory. Poisson curves are also described as B\'{e}zier curves of infinite degree, in agreement with the fact that spin-$\frac{d}{2}$ systems degenerate into harmonic oscillators as $d \rightarrow \infty$ \cite[Ch. 7.7]{combescure2012coherent}.

\section{Conclusions and Outlook}\label{sec:conc}
We established a connection between B\'{e}zier and Poisson curves, and physical systems in classical, as well as in quantum mechanics. We note that some of the mathematical and physical concepts relevant to our work have found use recently in geometric design and computer graphics. Geometric quantization have been employed to give representations of vector fields in the context of fluid simulation \cite{weissmann2014smoke,chern2016schrodinger,chern2017inside} -- see \cite{chern2017fluid} for an overview. Complex line bundles were used to optimize direction fields \cite{knoppel2013globally} and stripe patterns \cite{knoppel2015stripe} on surfaces, as well as volumetric deformations \cite{chern2015close} -- see also \cite{knoppel2016complex}. Pauli spin matrices represent (unit) quaternions, and the theory of angular momentum is intimately related to that of the 3D rotation group $SO(3)$ and its double cover $SU(2)$ -- quaternions have long been employed in computer animation \cite{hanson2006visualizing}, but have also found applications in the differential geometry of discrete surfaces \cite{bobenko1994surfaces,kamberov2002quaternions,crane2011spin,crane2013conformal,liu2017dirac,ye2018unified,hoffmann2018discrete,chern2018shape}, as well as in the theory of Pythagorean-Hodograph Curves \cite{choi2002clifford,farouki2008pythagorean}. We also mention that integrable Hamiltonian systems have a well-established relationship with the differential geometry of 3D surfaces \cite{terng2000geometry,rogers2002backlund,bobenko2008discrete}, as well as the dynamics of curves \cite{chou2002integrable,capovilla2002hamiltonians,sato2015generalization,inoguchi2018log,chern2018computing} in the context of infinite-dimensional systems (solitons) \cite{palais1997symmetries}. 

Due to the richness of the related mathematics and physics, there are countless possible avenues for future work -- we sample only the most immediately relevant:
\begin{itemize}
	\item As already mentioned, our approach can also be applied to B\'{e}zier tensor products and simplices. In general, toric varieties correspond to mechanical systems with the special property of \emph{complete integrability}, so that the moment map image is a convex (Newton) polytope \cite{atiyah1983angular,delzant1988hamiltoniens}, and quantization gives the lattice points inside \cite{vergne1996convex,hamilton2007quantization}. This subject is the topic of a planned follow-up paper \cite{vaitkus2018surface}.
	\item Everything we have discussed regarding polar forms, coherent states, geometric quantization, moment maps and oscillators is explained most elegantly using the language of Lie groups and Lie algebras \cite{singer2006linearity,sattinger1986lie,fulton1991representation}, and relates to the symmetric algebra approach of Ramshaw \cite{ramshaw2001paired}, and  combinatorics \cite{blasiak2011combinatorial} as well. We plan to apply these tools to the study of CAGD representations. It appears possible, for example, to derive the $q$-deformed \cite{goldman2015quantum} and umbral \cite{winkel2014generalization} generalizations of B\'{e}zier curves within this framework. It is also an interesting question whether the analogy between P\'{o}lya's urn models and harmonic oscillators carries over to curves based on more complicated urn models.
	\item The control net of B\'{e}zier (and more general toric) curves and surfaces can be characterized as limits of toric degenerations \cite{garc2011toric}. Whether this connection can be exploited in a the context of integrable systems -- following works such as \cite{harada2015integrable} -- remains to be seen. 
	\item The topic of B\'{e}zier curves and surfaces has many other connections to mathematical physics that we have not discussed. Spin systems have an interpretation in terms of the Hall effect for magnetic monopoles \cite{hasebe2013topological}. Toric varieties are known to arise naturally in the context of quantum field theories, known as sigma-models \cite{witten1993phases,hori2003mirror}. Geometric quantization has an alternative interpretation in terms of string theory and branes \cite{gukov2008branes}. The oscillator model for spin systems is an example of a general phenomena wherein certain approximations become exact for some special physical systems \cite{szabo2003equivariant}, which is related to Morse theory in differential topology, and the physics of supersymmetry (Clifford algebras) \cite{witten1982supersymmetry,stone1989supersymmetry} .
	\item Defining B-splines (which are smoothly connected Bernstein polynomials), within a physical framework is the primary open problem for this line of research. Remarkably, splines -- interpreted as volumes of simplex cross-sections \cite{deconcini2010topics,cohen1987cones} -- already play a role in the theory of constrained (reduced) mechanical systems, by a theorem of Duistermaat and Heckman \cite{duistermaat1982variation,atiyah1983angular}. This theorem suggests that arriving at a quantum interpretation of B-splines might be possible using Feynman's path integral formalism \cite{szabo2003equivariant}.   
\end{itemize}

\section*{Acknowledgements}
This project has been supported by the Hungarian Scientific Research Fund (OTKA, No.124727). The author is immensely grateful to his advisor Tam\'{a}s V\'{a}rady for support in pursuing this work. I would also like to thank Ron Goldman, Malcolm Sabin, Kestutis Kar\v{c}iauskas, Alyn Rockwood and G\'{a}bor Etesi for illuminating discussions. 

\section*{References}

\bibliography{QuantumBezier_arXiv.bib}

\end{document}